%% file: main.tex
\documentclass{article}

\usepackage{microtype}
\usepackage{graphicx}
\usepackage{subcaption}
\usepackage{booktabs} 
\usepackage{comment}
\usepackage{amsthm}
\usepackage{placeins} 
\usepackage{listings}
\usepackage{minted}
\usepackage[T1]{fontenc}
\usepackage{upquote}
\usepackage{hyperref}
\usepackage{tcolorbox}
\tcbuselibrary{listings} 
\definecolor{jsonkey}{RGB}{127,0,85}
\definecolor{jsonstring}{RGB}{42,0,255}
\definecolor{jsonnumber}{RGB}{0,127,0}
\definecolor{jsonbackground}{RGB}{248,248,248}
\definecolor{commentgray}{RGB}{106,115,125}

\lstdefinelanguage{json}{
    basicstyle=\small\ttfamily,
    showstringspaces=false,
    breaklines=true,
    frame=single,
    backgroundcolor=\color{jsonbackground},
    rulecolor=\color{gray!50},
    literate=
     *{0}{{{\color{jsonnumber}0}}}{1}
      {1}{{{\color{jsonnumber}1}}}{1}
      {2}{{{\color{jsonnumber}2}}}{1}
      {3}{{{\color{jsonnumber}3}}}{1}
      {4}{{{\color{jsonnumber}4}}}{1}
      {5}{{{\color{jsonnumber}5}}}{1}
      {6}{{{\color{jsonnumber}6}}}{1}
      {7}{{{\color{jsonnumber}7}}}{1}
      {8}{{{\color{jsonnumber}8}}}{1}
      {9}{{{\color{jsonnumber}9}}}{1}
      {:}{{{\color{jsonkey}{:}}}}{1}
      {,}{{{\color{black}{,}}}}{1}
      {\{}{{{\color{black}{\{}}}}{1}
      {\}}{{{\color{black}{\}}}}}{1}
      {[}{{{\color{black}{[}}}}{1}
      {]}{{{\color{black}{]}}}}{1}
      {.}{{{\color{jsonnumber}.}}}{1},
    string=[s]{"}{"},
    stringstyle=\color{jsonstring},
    comment=[l]{//},
    commentstyle=\color{commentgray}\itshape,
    morecomment=[s]{/*}{*/},
}

\newcommand{\mK}{{\ensuremath{\mathbf{K}}}}
\newcommand{\vo}{{\ensuremath{\mathbf{o}}}}

\usepackage[accepted]{icml2026}

\usepackage{amsmath}
\usepackage{amssymb}
\usepackage{mathtools}
\usepackage{amsthm}
\usepackage{multirow}
\usepackage{enumitem}
\setlist[itemize]{leftmargin=*, itemsep=2pt}

\usepackage[capitalize,noabbrev]{cleveref}

\input{macros}

\usepackage[textsize=tiny]{todonotes}

\makeatletter
\newtheoremstyle{tight}
  {6pt}    
  {4pt}    
  {\itshape}
  {}
  {\bfseries}
  {.}
  {.5em}
  {}
\makeatother
\theoremstyle{tight}
\newtheorem{theorem}{Theorem}[section]
\newtheorem{proposition}[theorem]{Proposition}
\newtheorem{lemma}[theorem]{Lemma}

\theoremstyle{definition}
\newtheorem{definition}[theorem]{Definition}

\theoremstyle{remark}

\icmltitlerunning{HypRAG}

\begin{document}
\setlength{\textfloatsep}{6pt}
\setlength{\floatsep}{6pt}

\twocolumn[
  \icmltitle{HypRAG: Hyperbolic Dense Retrieval for Retrieval Augmented Generation}



  \icmlsetsymbol{equal}{*}
  \begin{icmlauthorlist}
    \icmlauthor{Hiren Madhu}{y}
    \icmlauthor{Ngoc Bui}{y}
    \icmlauthor{Ali Maatouk}{y}
    \icmlauthor{Leandros Tassiulas}{y}
    \icmlauthor{Smita Krishnaswamy}{y}
    \icmlauthor{Menglin Yang}{hk}
    \icmlauthor{Sukanta Ganguly}{n}
    \icmlauthor{Kiran Srinivasan}{n}
    \icmlauthor{Rex Ying}{y}
  \end{icmlauthorlist}

  \icmlaffiliation{y}{Yale University, USA}
  \icmlaffiliation{n}{NetApp, USA}
  \icmlaffiliation{hk}{Hong Kong University of Science and Technology (Guangzhou), China}

  \icmlcorrespondingauthor{Hiren Madhu}{hiren.madhu@yale.edu}

  \icmlkeywords{Language models, RAG, Retrieval, Non-euclidean foundation models, Geometric deep learning}

  \vskip 0.3in
]

\printAffiliationsAndNotice{}  

\begin{abstract}
Embedding geometry plays a fundamental role in retrieval quality, yet dense retrievers for retrieval-augmented generation (RAG) remain largely confined to Euclidean space. However, natural language exhibits hierarchical structure from broad topics to specific entities that Euclidean embeddings fail to preserve, causing semantically distant documents to appear spuriously similar and increasing hallucination risk. To address these limitations, we introduce hyperbolic dense retrieval, developing two model variants in the Lorentz model of hyperbolic space: HyTE-FH, a fully hyperbolic transformer, and HyTE-H, a hybrid architecture projecting pre-trained Euclidean embeddings into hyperbolic space. To prevent representational collapse during sequence aggregation, we introduce the Outward Einstein Midpoint, a geometry-aware pooling operator that provably preserves hierarchical structure. On MTEB, HyTE-FH outperforms equivalent Euclidean baselines, while on RAGBench, HyTE-H achieves up to 29\% gains over Euclidean baselines in context relevance and answer relevance using substantially smaller models than current state-of-the-art retrievers. Our analysis also reveals that hyperbolic representations encode document specificity through norm-based separation—with over 20\% radial increase from general to specific concepts—a property absent in Euclidean embeddings, underscoring the critical role of geometric inductive bias in faithful RAG systems. The code is available at: \url{https://github.com/Graph-and-Geometric-Learning/HypRAG}.
\end{abstract}

\input{sections/1-introduction}

\input{sections/2-related-works}
\input{sections/3-preliminary}
\input{sections/4-method}

\input{sections/5-results}

\section{Conclusion}\label{sec:conclustion}

We introduced hyperbolic dense retrieval for RAG, showing that aligning embedding geometry with the hierarchical structure of language improves faithfulness and answer quality. Our approach preserves document-level structure during aggregation through a geometry-aware pooling operator, addressing a key failure mode of Euclidean retrieval pipelines.
Across evaluations, we observe consistent gains using models substantially smaller than current state-of-the-art retrievers, highlighting the effectiveness of hyperbolic inductive bias over scale alone. Case studies further show that hyperbolic representations organize documents by specificity through norm-based separation, a property absent in Euclidean embeddings. These findings suggest that embedding geometry is a central design choice for reliable retrieval in RAG systems.

\FloatBarrier
\newpage
\section*{Impact Statement}
This paper presents work whose goal is to advance the field of Machine Learning, specifically dense retrieval for retrieval-augmented generation systems. By improving the geometric fidelity of document embeddings, our approach aims to reduce retrieval errors that can lead to hallucinated or poorly grounded responses in RAG systems. We believe more accurate retrieval contributes positively to the reliability of AI-generated content. Additionally, our fully hyperbolic model demonstrates improved parameter efficiency, which may reduce computational costs and environmental impact associated with training and deploying embedding models. There are many potential societal consequences of our work, none which we feel must be specifically highlighted here.
\bibliographystyle{icml2026}
\bibliography{example_paper}

\pagebreak

\onecolumn
\appendix

\pagenumbering{arabic}
\renewcommand*{\thepage}{A\arabic{page}}
\renewcommand{\thefigure}{A\arabic{figure}}
\renewcommand{\thetable}{A\arabic{table}}
\setcounter{figure}{0}
\setcounter{table}{0}
\setcounter{page}{1}
\input{sections/appendix}
\end{document}

%% file: macros.tex
\def\cast{{
   \mathord{
      \hbox to 0em{
         \ooalign{
	   \smash{\hbox{$\ast$}}\crcr
	   \smash{\hskip-1pt\Large\hbox{$\circ$}} }
	 \hidewidth}
      \phantom{\bigcirc}
} }}

\newcommand{\bds}{\begin {itemize}}
\newcommand{\eds}{\end {itemize}}
\newcommand{\bdf}{\begin{definition}}
\newcommand{\blm}{\begin{lemma}}
\newcommand{\edf}{\end{definition}}
\newcommand{\elm}{\end{lemma}}
\newcommand{\bthm}{\begin{theorem}}
\newcommand{\ethm}{\end{theorem}}
\newcommand{\bprp}{\begin{prop}}
\newcommand{\eprp}{\end{prop}}
\newcommand{\bcl}{\begin{claim}}
\newcommand{\ecl}{\end{claim}}
\newcommand{\bcr}{\begin{coro}}
\newcommand{\ecr}{\end{coro}}
\newcommand{\bquest}{\begin{question}}
\newcommand{\equest}{\end{question}}

\newcommand{\larrow}{{\larrow}}

\newcommand{\argmin}{\ensuremath{\mathrm{arg}\min}}
\newcommand{\argmax}{\ensuremath{\mathrm{arg}\max}}

\newcommand{\cC}{{\ensuremath{\mathcal{C}}}}

\newcommand{\vb}{{\ensuremath{{\mathbf{b}}}}}

\newcommand{\vd}{{\ensuremath{{\mathbf{d}}}}}

\newcommand{\vm}{{\ensuremath{{\mathbf{m}}}}}

\newcommand{\vq}{{\ensuremath{{\mathbf{q}}}}}

\newcommand{\vv}{{\ensuremath{{\mathbf{v}}}}}

\newcommand{\vx}{{\ensuremath{{\mathbf{x}}}}}

\newcommand{\vy}{{\ensuremath{{\mathbf{y}}}}}
\newcommand{\vz}{{\ensuremath{{\mathbf{z}}}}}

\newcommand{\mQ}{{\ensuremath{\mathbf{Q}}}}

\newcommand{\mV}{{\ensuremath{\mathbf{V}}}}

\newcommand{\mW}{{\ensuremath{\mathbf{W}}}}

\usepackage{latexsym}

\def\IC{\mathbb C}
\def\IN{\mathbb N}
\def\IZ{\mathbb Z}
\def\IR{\mathbb R}

\def\shat{^{\mathchoice{}{}%
 {\,\,\smash{\hbox{\lower4pt\hbox{$\widehat{\null}$}}}}%
 {\,\smash{\hbox{\lower3pt\hbox{$\hat{\null}$}}}}}}

\def\bSigma{{
      \ooalign{
      \smash{\hskip.4pt\raise.4pt\hbox{$\Sigma$}}\vphantom{}\crcr
      \smash{\hskip.7pt\raise.6pt\hbox{$\Sigma$}}\vphantom{}\crcr
      \smash{\hbox{$\Sigma$}}\vphantom{$\Sigma$}}
      \vphantom{\hbox{$\Sigma$}}
      }}
\def\bTheta{{
      \ooalign{
      \smash{\hskip.5pt\raise.5pt\hbox{$\Theta$}}\vphantom{}\crcr
      \smash{\hskip.0pt\raise.1pt\hbox{$\Theta$}}\vphantom{}\crcr
      \smash{\hbox{$\Theta$}}\vphantom{$\Theta$}}
      \vphantom{\hbox{$\Theta$}}
      }}
\def\bDelta{{
      \ooalign{
      \smash{\hskip.4pt\raise.4pt\hbox{$\Delta$}}\vphantom{}\crcr
      \smash{\hskip.7pt\raise.6pt\hbox{$\Delta$}}\vphantom{}\crcr
      \smash{\hbox{$\Delta$}}\vphantom{$\Delta$}}
      \vphantom{\hbox{$\Delta$}}
      }}
\def\bLambda{{
      \ooalign{
      \smash{\hskip.5pt\raise.5pt\hbox{$\Lambda$}}\vphantom{}\crcr
      \smash{\hskip.0pt\raise.1pt\hbox{$\Lambda$}}\vphantom{}\crcr
      \smash{\hbox{$\Lambda$}}\vphantom{$\Lambda$}}
      \vphantom{\hbox{$\Lambda$}}
      }}

\makeatletter

\def\bordermatrix#1{\begingroup \m@th
  \@tempdima 8.75\p@
  \setbox\z@\vbox{%
    \def\cr{\crcr\noalign{\kern2\p@\global\let\cr\endline}}%
    \ialign{$##$\hfil\kern2\p@\kern\@tempdima&\thinspace\hfil$##$\hfil
      &&\quad\hfil$##$\hfil\crcr
      \omit\strut\hfil\crcr\noalign{\kern-\baselineskip}%
      #1\crcr\omit\strut\cr}}%
  \setbox\tw@\vbox{\unvcopy\z@\global\setbox\@ne\lastbox}%
  \setbox\tw@\hbox{\unhbox\@ne\unskip\global\setbox\@ne\lastbox}%
  \setbox\tw@\hbox{$\kern\wd\@ne\kern-\@tempdima\left[\kern-\wd\@ne
    \global\setbox\@ne\vbox{\box\@ne\kern2\p@}%
    \vcenter{\kern-\ht\@ne\unvbox\z@\kern-\baselineskip}\,\right]$}%
  \null\;\vbox{\kern\ht\@ne\box\tw@}\endgroup}
\makeatother

\makeatletter
\def\argmin{\mathop{\operator@font arg\,min}}
\def\argmax{\mathop{\operator@font arg\,max}}
\makeatother

\newcommand{\bea}{\begin{array}}
\newcommand{\ena}{\end{array}}
\newcommand{\beq}{\begin{equation}}
\newcommand{\enq}{\end{equation}}

\newcommand{\beqa}{\begin{eqnarray}}
\newcommand{\enqa}{\end{eqnarray}}

\newcommand{\beqan}{\begin{eqnarray*}}
\newcommand{\enqan}{\end{eqnarray*}}

\newcommand{\AL}{\begin{enumerate}}
\newcommand{\ALE}{\end{enumerate}}

\def\addots{\mathinner{
    \mkern1mu\raise0pt\vbox{\kern7pt\hbox{.}}
    \mkern2mu\raise4pt\hbox{.}
    \mkern2mu\raise7pt\hbox{.}
    \mkern1mu}}

\def\sddots{\mathinner{
    \mkern.8mu\raise7pt\hbox{.}
    \mkern.8mu\raise4pt\hbox{.}
    \mkern.8mu\raise0pt\vbox{\kern7pt\hbox{.}}
    \mkern1mu}}

\def\saddots{\mathinner{
    \mkern.2mu\raise0pt\vbox{\kern7pt\hbox{.}}
    \mkern.2mu\raise4pt\hbox{.}
    \mkern.2mu\raise7pt\hbox{.}
    \mkern1mu}}

\def\sqplus{\mathbin{
	{\ooalign{\hfil\raise.3ex\hbox{\scriptsize
	+}\hfil\crcr\mathhexbox274\crcr\mathhexbox275}}
	}} 
\def\sqminus{\mathbin{
	{\ooalign{\hfil\raise.3ex\hbox{\scriptsize
	--}\hfil\crcr\mathhexbox274\crcr\mathhexbox275}}
	}}

\def\IC{{
   \mathord{
      \hbox to 0em{
	 \hskip-4pt
         \ooalign{
	   \smash{\hskip1.9pt\raise2.6pt\hbox{$\scriptscriptstyle |$}}\crcr
	   \smash{\hbox{\rm\sf C}} }
	 \hidewidth}
      \phantom{\hbox{\rm\sf C}}
} }}
\def\IN{
    {\ooalign{
   \smash{\hskip2.2pt\raise1.5pt\hbox{$\scriptscriptstyle |$}}\vphantom{}\crcr
   \hbox{\sf N}
	}}
	} 
\def\IZ{
    {\ooalign{
   \smash{\hskip1.9pt\raise0pt\hbox{$\sf Z$}}\vphantom{}\crcr
   \hbox{\sf Z}
	}}
	} 
\def\IR{
    {\ooalign{
   \smash{\hskip2.2pt\raise1.5pt\hbox{$\scriptscriptstyle |$}}\vphantom{}\crcr
   \smash{\hskip2.2pt\raise3.3pt\hbox{$\scriptscriptstyle |$}}\vphantom{}\crcr
   \hbox{\sf R}
	}}
	} 

\DeclareMathAlphabet{\mathcmb}{OT1}{cmr}{b}{n}

\def\bSigma{\ensuremath{\mathcmb{\Sigma}}}
\def\bLambda{\ensuremath{\mathcmb{\Lambda}}}

\def\bTheta{\ensuremath{\mathcmb{\Theta}}}

\newcommand{\SI}{\begin{indlist}}
\newcommand{\EI}{\end{indlist}}

\newcommand{\DL}{\begin{dashlist}}
\newcommand{\DLE}{\end{dashlist}}

\makeatletter
\def\setboxz@h{\setbox\z@\hbox}
\def\wdz@{\wd\z@}
\def\boxz@{\box\z@}
\def\underset#1#2{\binrel@{#2}%
  \binrel@@{\mathop{\kern\z@#2}\limits_{#1}}}
\def\binrel@#1{\begingroup
  \setboxz@h{\thinmuskip0mu
    \medmuskip\m@ne mu\thickmuskip\@ne mu
    \setbox\tw@\hbox{$#1\m@th$}\kern-\wd\tw@
    ${}#1{}\m@th$}%
  \edef\@tempa{\endgroup\let\noexpand\binrel@@
    \ifdim\wdz@<\z@ \mathbin
    \else\ifdim\wdz@>\z@ \mathrel
    \else \relax\fi\fi}%
  \@tempa
}
\let\binrel@@\relax%
\makeatother


%% file: sections/1-introduction.tex
\section{Introduction}

Dense retrieval forms the backbone of retrieval-augmented generation (RAG) systems~\cite{lewis2020retrieval, fan2024survey}, where embedding quality directly determines whether generated responses are grounded in evidence or hallucinated. By retrieving relevant documents and conditioning generation on this context, RAG systems produce responses that are more attributable and aligned with verifiable sources~\cite{ni2025towards}. Yet, despite advances in retrieval architectures, current systems continue to rely on Euclidean embeddings, a choice inherited from standard neural networks rather than from language structure itself. 

Natural language inherently exhibits strong hierarchical organization~\cite{he2025position, robinson2024structure}, with semantic structure giving rise to locally tree-like neighborhoods. Euclidean spaces struggle to represent such branching hierarchies due to polynomial volume growth~\cite{he2025position}, introducing shortcuts between hierarchically distinct regions that distort semantic relationships. 
In retrieval settings, these distortions can cause semantically distant documents to appear spuriously similar~\cite{radovanovic2010hubs, bogolin2022cross}, degrading retrieval precision~\cite{reimers2021curse}: a query about a specific subtopic may retrieve documents from sibling or parent categories that share similarity but lack the required specificity.

\begin{figure}[t]
    \centering
    \includegraphics[width=\linewidth]{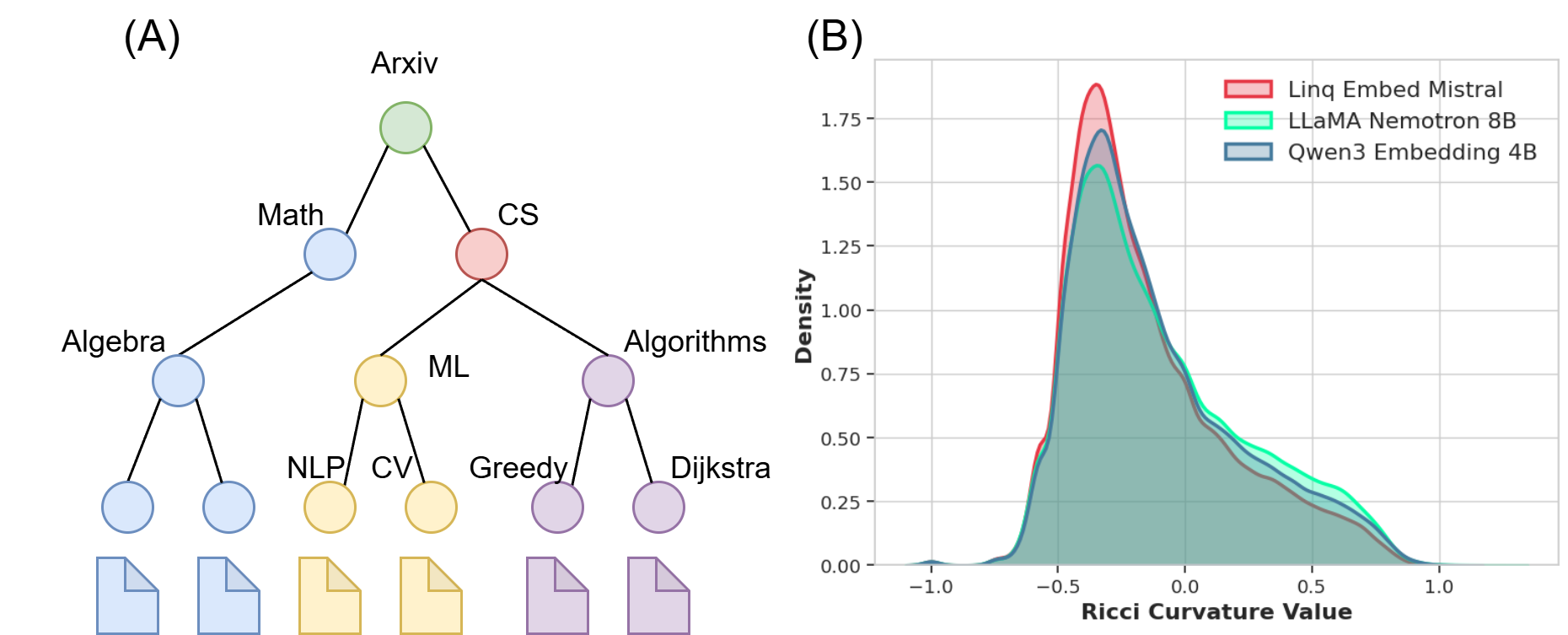}
    \caption{\textbf{Hierarchies in Text.} (A) Documents naturally organize into branching hierarchies where general topics spawn increasingly specific subtopics. Euclidean spaces distort such hierarchies due to crowding effects, while hyperbolic geometry preserves hierarchical relationships through exponential volume growth. (B) Ricci curvature analysis of document embeddings from strong baselines reveals predominantly negative curvature, indicating tree-like semantic structure.}
    \label{fig:curvatures}
\end{figure}

To further see why geometry matters for retrieval, consider a query about transformer attention mechanisms (Figure~\ref{fig:curvatures}A). Relevant documents form a natural hierarchy—from general concepts like NLP, to transformers, to specific components like multi-head attention—inducing tree-like semantic structure. Euclidean embeddings struggle to preserve this organization: representing both broad topics and specialized descendants forces a trade-off between semantic proximity and fine-grained separation, causing neighborhood crowding and distortion. Hyperbolic geometry resolves this tension through exponential volume growth, allowing general concepts to remain compact while specific documents spread outward.
To test whether such structure appears empirically, we analyze Ollivier–Ricci curvature~\cite{ni2019community}—a measure of local geometry where negative values indicate tree-like branching—on graphs built from MS MARCO document embeddings~\cite{bajaj2016ms}. Across several strong models (Linq Embed Mistral, LLaMA Nemotron 8B, Qwen3 Embedding 4B), curvature distributions are predominantly negative (Figure~\ref{fig:curvatures}B), providing empirical evidence that retrieval-relevant embeddings exhibit intrinsic hyperbolic structure and motivating hyperbolic geometry as a natural inductive bias for dense retrieval. 

Recent work has begun exploring hyperbolic geometry for language modeling and RAG systems, though with different focus areas. HELM~\cite{he2025helm} introduces a family of hyperbolic language models that operate entirely in hyperbolic space, but these models target text generation rather than retrieval. In the RAG setting, HyperbolicRAG~\cite{cao2025hyperbolicrag} projects embeddings into the Poincaré ball to encode hierarchical depth within a static, pre-built knowledge graph, using dual-space retrieval that fuses Euclidean and hyperbolic rankings. However, HyperbolicRAG relies on Euclidean encoders to produce the initial embeddings, leaving the fundamental geometric mismatch. Moreover, aggregating token embeddings into document representations poses a challenge that existing work in hyperbolic learning does not address~\cite{yang2024hypformer}. As we show in Proposition~\ref{prop:euclidean-contraction}, naively averaging tokens in Euclidean space before projecting to hyperbolic space causes representations to collapse toward the origin, destroying the hierarchical structure that is meant be to preserved. 

To this end, we introduce hyperbolic dense retrieval for RAG, framing embedding geometry as a design choice for improving evidence selection and grounding at the representation level. We study this through two complementary instantiations. First, HyTE-FH (Hyerbolic Text Encoder, Fully Hyperbolic) operates entirely in the Lorentz model of hyperbolic space, enabling end-to-end representation learning. Second, HyTE-H (Hybrid) maps embeddings from off-the-shelf Euclidean encoders into hyperbolic space, allowing us to build on existing pre-trained Euclidean models. The Lorentz model's intrinsic geometry enables parameter-efficient scaling: HyTE-H outperforms Euclidean baselines several times (2-3x) its size, reducing memory footprint in resource-constrained settings. To address the aggregation challenge in both instantiations, we introduce the Outward Einstein Midpoint, a geometry-aware pooling operator that amplifies tokens farther from the origin, provably preserving hierarchical structure during pooling.

Through extensive evaluation on RAGBench, we demonstrate that both hyperbolic variants consistently outperform Euclidean baselines in answer relevancy across multiple datasets, while achieving competitive performance on MTEB. Our experiments validate three key findings: (1) hyperbolic retrieval substantially improves RAG performance, with up to 29\% gains over Euclidean baselines in context relevance and answer relevance; (2) hyperbolic models naturally encode concept-level hierarchies in their radial structure, with the fully hyperbolic model achieving a 20.2\% radius increase from general to specific concepts, while Euclidean models fail to capture this organization; and (3) our theoretically grounded Outward Einstein Midpoint pooling preserves this hierarchical structure during aggregation.

%% file: sections/2-related-works.tex
\section{Related Works}\label{sec:rel-works}

\noindent\textbf{Text Embeddings and Dense Retrieval.}
Dense retrieval embeds queries and documents into a shared vector space and ranks candidates by similarity (\textit{e.g.}, dot product or cosine). Transformer bi-encoders (\textit{e.g.}, BERT~\citep{devlin2019bert}) are widely used in this context due to their scalability with maximum inner product search~\cite{karpukhin2020dense,reimers2019sentence}. Most methods train with contrastive objectives using in-batch and hard negatives \cite{gao2021simcse,xiong2020approximate}, often following large-scale pretraining plus task-specific fine-tuning \cite{wang2022text, li2023towards, nussbaum2024nomic}. More recently, decoder-only embedding models initialize from LLMs to exploit their pretrained linguistic knowledge~\citep{muennighoff2024generative, lee2024nv, zhang2025qwen3}. However, most retrievers remain reliant on inner products or distances in Euclidean geometry-an inductive bias often misaligned with the hierarchical structure of language and document collections. We address this gap by introducing hyperbolic geometry for text embeddings to better capture such a hierarchy.

\noindent\textbf{Retrieval Augmented Generation.} RAG grounds LLMs in retrieved evidence to improve factuality and access external knowledge~\cite{oche2025systematic}. It typically retrieves top-$k$ contexts (often via dense retrieval) and conditions generation on them~\citep{lewis2020retrieval}. Since the context window is limited, retrieval quality is a key bottleneck for relevance and faithfulness~\citep{friel2024ragbench}. Several methods improve reliability \emph{after} retrieval: Self-RAG~\cite{selfrag} and CRAG~\cite{crag} use learned critics to filter or re-rank evidence, while GraphRAG~\cite{graphrag} leverages knowledge graphs for structured subgraph retrieval. These approaches operate downstream of the embedding space and are complementary to ours geometric approach. Our goal is to improve RAG upstream by enhancing the retriever representations so that the initial top-$k$ evidence is more reliable under realistic efficiency constraints.

\noindent\textbf{Hyperbolic Representation Learning.} Hyperbolic geometry is primarily known for its ability to better capture hierarchical, tree-like structures~\citep{yang2023hyperbolic, peng2021hyperbolic}, which enhances performance in various tasks, including molecular generation~\citep{liu2019hyperbolic}, recommendation~\citep{yang2021hyper, li2021hyperbolic}, image retrieval~\citep{khrulkov2020hyperbolic, wei2024exploring, bui2025learning}, and knowledge graph embedding~\citep{ganea2018hyperbolic, bhuwan2018embedding}.
More recently, hyperbolic geometry has also shown promise for multi-modal embedding models~\citep{desai2023hyperbolic, ibrahimi2024intriguing, pal2024compositional} and foundation models~\citep{yang2024hyperbolic, he2025helm}. In contrast to these works, we study how hyperbolic representations can improve retrieval in RAG systems. Concurrently,~\citet{cao2025hyperbolicrag} use hyperbolic geometry to improve RAG rankings, but obtain hyperbolic embeddings via a simple projection from Euclidean encoders; by contrast, we build on fully hyperbolic encoders trained end-to-end and address key challenges in this setting, including providing the theoretically grounded geometry-aware pooling for document-level representations.


%% file: sections/3-preliminary.tex
\section{Hyperbolic Space Preliminaries}
\label{sec:lorentz_preliminaries}

In this section, we go over all the preliminaries of Lorentz model of hyperbolic space and introduce the basic building blocks that create HyTE-FH. 

\subsection{Lorentz Model of Hyperbolic Space}
We represent all embeddings in $d$-dimensional hyperbolic space $\mathbb{H}^d_K$ with constant negative curvature $K < 0$ using the Lorentz (hyperboloid) model.
In the Lorentz model, hyperbolic space is realized as the upper sheet of a two-sheeted hyperboloid embedded in $\mathbb{R}^{d+1}$,
\[
\mathbb{H}^d_K
=
\left\{
\vx \in \mathbb{R}^{d+1}
\;\middle|\;
\langle \vx, \vx \rangle_L = \frac{1}{K},\;
x_0 > 0
\right\},
\]
where the Lorentzian inner product is defined as
$\langle \vx, \vy \rangle_L=- x_0 y_0 + \sum_{i=1}^d x_i y_i.$
This formulation admits closed-form expressions for geodesic distances, barycentric operations, and parallel transport, and expresses similarity directly through Lorentzian inner products. 
The geodesic distance between two points $\vx, \vy \in \mathbb{H}^d_K$ is given by $d_K(\vx, \vy)=\frac{1}{\sqrt{-K}}\cosh^{-1}\!\left(K \langle \vx, \vy \rangle_L\right),$ which is a monotone function of the Lorentzian inner product. 

To support optimization, we make use of exponential and logarithmic maps between the manifold and its tangent spaces. For a point $\vx \in \mathbb{H}^d_K$, the logarithmic map $\log_x(\cdot)$ maps nearby points to the tangent space $T_x\mathbb{H}^d_K$, while the exponential map $\exp_x(\cdot)$ maps tangent vectors back to the manifold. These operators are used only where necessary for gradient-based updates, ensuring that all representations remain on $\mathbb{H}^d_K$ and preserving the hierarchical structure induced by negative curvature. 

\subsection{Hyperbolic Transformer Components}
Standard operations cannot be applied directly in hyperbolic space, as they may violate the manifold constraint~\cite{yang2024hypformer}. To address this, we introduce hyperbolic components that serve as the building blocks for our embedding model. These operations are performed via a re-centering procedure that applies Euclidean operations in a latent space and maps the result back to the Lorentz model. By doing so, the resulting vector is constructed to satisfy the Lorentz constraint, thereby preserving the hyperbolic structure of representations. We present these operations as follows.

\noindent\textbf{Lorentz Linear Layer.} Given curvatures $K_1, K_2$, and parameters $\mathbf{W} \in \mathbb{R}^{(n+1) \times m}$ and $\mathbf{b} \in \mathbb{R}^m$ with $\vz = |\mathbf{W}^\top \mathbf{x} + \mathbf{b}|$, the {Lorentzian linear transformation}~\cite{yang2024hypformer} is the map $\text{HLT} : \mathbb{L}^{K_1,n} \to \mathbb{L}^{K_2,m}$ given by,
\begin{align*}
\text{HLT}(\mathbf{x}; \mathbf{W}, \mathbf{b}) = \sqrt{\frac{K_2}{K_1}} \cdot \Bigg[ &\sqrt{\|\vz\|^2 - 1/K_2}, \vz \Bigg]
\end{align*}

\noindent\textbf{Hyperbolic Layer Normalization.}
Given token embeddings $X = \{\vx_i\}_{i=1}^n \subset \mathbb{H}^d_K$, hyperbolic layer normalization is defined as
\begin{align*}
\mathrm{HypLayerNorm}(X) &= \Bigg(\sqrt{\frac{K_1}{K_2}\left\lVert \vz\right\rVert_2^2 - \frac{1}{K_2}}, \sqrt{\frac{K_1}{K_2}}\vz\, \Bigg)
\end{align*}

where $z=f_{\mathrm{LN}}\!\left(\vx_{i,[1:d]}\right)$, $f_{\mathrm{LN}}(\cdot)$ denotes standard Euclidean LayerNorm applied to the spatial components of the embedding, and $K_1, K_2 > 0$ are input and output curvature respectively.

\noindent\textbf{Lorentz Residual Connection.} Let $\mathbf{x}, f(\mathbf{x}) \in \mathbb{L}^{K,n}$ where $\mathbf{x}$ is an input vector and $f(\mathbf{x})$ is the output of a neural network $f$. Then, the {Lorentzian residual connection}~\cite{he2024lorentzian} is given by $\mathbf{x} \oplus_{\mathcal{L}} f(\mathbf{x}) = \alpha_1 \mathbf{x} + \alpha_2 \mathbf{y}$, where
\begin{align*}
\alpha_i = w_i / \left( \sqrt{-K} \| w_1 \mathbf{x} + w_2 f(\mathbf{x}) \|_{\mathcal{L}} \right), \text{ for } i \in \{0, 1\},
\end{align*}
where $\alpha_1, \alpha_2$ are weights parametrized by constants $(w_1, w_2) \in \mathbb{R}^2 \setminus \{(0,0)\}$.

\noindent\textbf{Hyperbolic Self-Attention.}
In hyperbolic attention, similarity is governed by hyperbolic geodesic distance~\cite{shimizu2020hyperbolic}.
Given token embeddings $X=\{\vx_i\}_{i=1}^n \subset \mathbb{H}^d_K$, queries, keys, and values are computed via Lorentz-linear transformations as $\mQ=\mathrm{HLT}(X;\mW^Q,\vb^Q)$, $\mK=\mathrm{HLT}(X;\mW^K,\vb^K)$, and $\mV=\mathrm{HLT}(X;\mW^V,\vb^V)$, where $\mathrm{HLT}(\cdot)$ denotes a linear map in Lorentz space. Attention weights are computed using squared hyperbolic geodesic distances~\cite{he2025hyperbolic, chen2022fully} as
\[
\nu_{i,j}
=
\frac{
\exp\!\left(- d^2_K({\vq}_i, \mathbf{k}_j) / \sqrt{m} \right)
}{
\sum_{l=1}^n \exp\!\left(- d^2_K({\vq}_i, \mathbf{k}_l) / \sqrt{m} \right)
},
\]
with head dimension $m$. This prioritizes geodesic proximity rather than angular similarity. The attended representation is obtained via a Lorentzian weighted midpoint
\[
\mathrm{Att}_{\mathcal{L}}(\vx)_i
=
\frac{
\sum_{j=1}^n \nu_{i,j} \lambda_j \vv_j
}{
\sqrt{-K}\,
\left\lVert
\sum_{j=1}^n \nu_{i,j} \lambda_j \vv_j
\right\rVert_{\mathcal{L}}
},
\]
where $\lambda_j=v_{j,0}$ is the Lorentz factor. Unlike Euclidean averaging, this aggregation remains on $\mathbb{H}^d_K$ and preserves radial structure during contextualization.

%% file: sections/4-method.tex
\section{Method}\label{sec:method}
We now outline our approach to hyperbolic dense retrieval. We begin by introducing the two proposed HyTE architectures, followed by an analysis of why naïve pooling strategies fail in hyperbolic space, and conclude by presenting our geometry-aware aggregation operator.
\subsection{Architecture}
\label{sec:architecture}

\begin{figure}[t]
    \centering
    \includegraphics[width=0.7\linewidth]{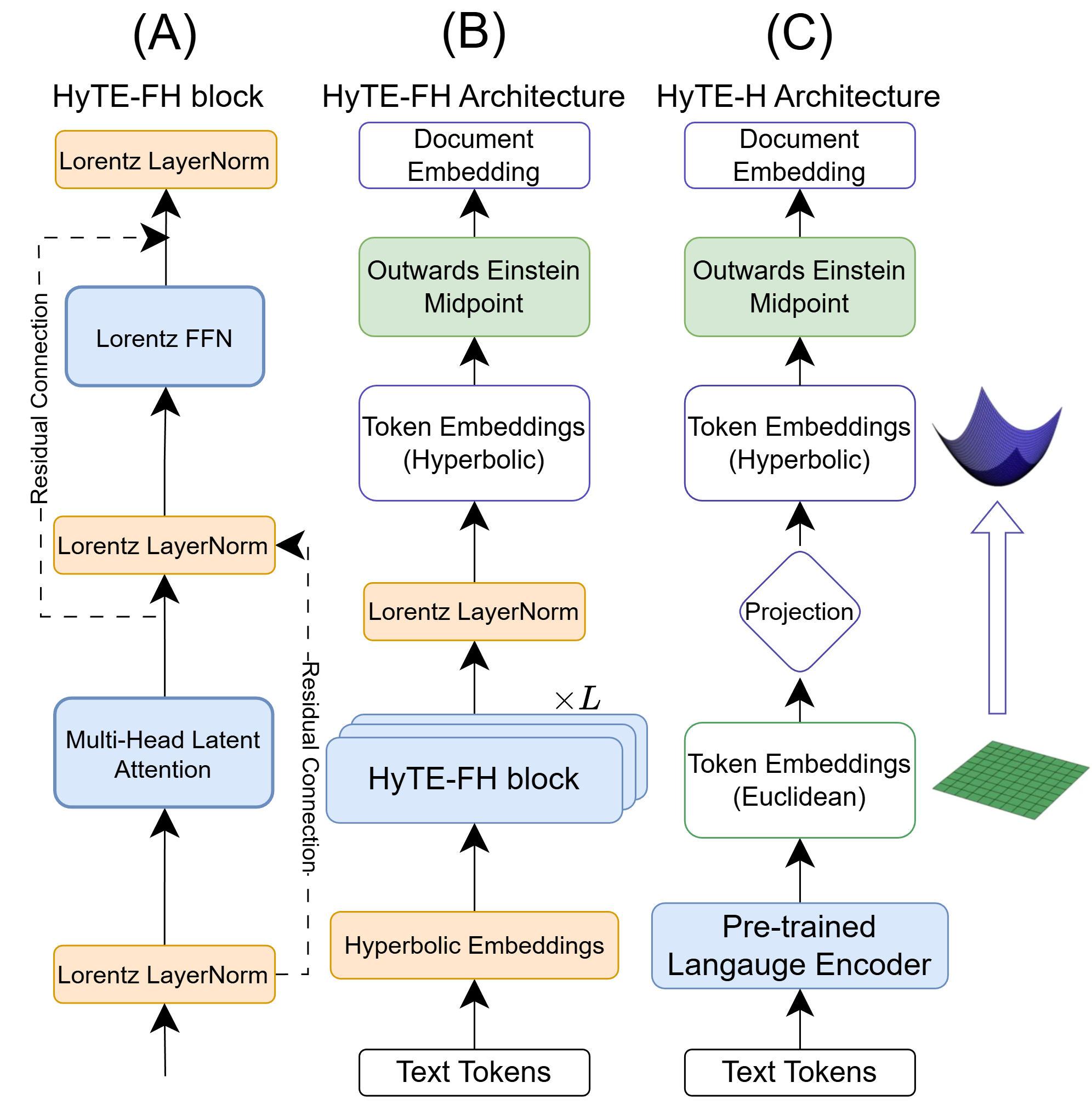}
    \caption{\textbf{HyTE Architecture}. A) HyTE-FH Encoder Block, B) HyTE-FH architecture, C) HyTE-H Architecture.}
    \label{fig:architecture}
\end{figure}

The hyperbolic encoder components described in Section~\ref{sec:lorentz_preliminaries} form the building blocks (Figure~\ref{fig:architecture}A) of HyTE-FH, our fully hyperbolic transformer (Figure~\ref{fig:architecture}B). By operating entirely within hyperbolic geometry, HyTE-FH preserves hierarchical structure throughout token-level contextualization, aggregation, and similarity computation, with semantic abstraction and specificity encoded along radial dimensions. HyTE-H (Figure~\ref{fig:architecture}C) instead projects pretrained Euclidean representations into hyperbolic space, which allows hyperbolic geometry to be leveraged with a strong initialization and avoiding the need to train a fully hyperbolic encoder from scratch. 

While hyperbolic self-attention enables geometry-consistent contextualization at the token level, dense retrieval requires aggregating variable-length sequences into fixed-dimensional representations. Standard approaches map representations to tangent space, aggregate in Euclidean space, then map back to the manifold~\cite{yang2024hypformer, desai2023hyperbolic}, but this distorts hierarchical structure encoded in radial depth in both the models. In the following subsections, we analyze this failure mode formally and introduce a pooling operator designed to preserve hierarchical information. 

\subsection{Failure of Naïve Hyperbolic Pooling}
\label{sec:pooling_failure}

Naïve pooling strategies that aggregate in Euclidean space~\cite{yang2024hypformer, desai2023hyperbolic} systematically contract representations toward the origin. This follows from hyperbolic convexity: for any $\{\vx_i\}_{i=0}^n \subset \mathbb{H}^d_K$, the barycenter lies strictly closer to the origin than the maximum-radius point unless all points coincide. Consequently, document-level embeddings lose the radial separation that encodes document specificity through hierarchical depth. To address this failure mode, we first establish notation for projecting ambient vectors onto the hyperboloid and measuring radial depth. 

\begin{definition}[Lorentz Projection] \label{def:lorentz-projection} 
For $\vv \in \mathbb{R}^{d+1}$ with $\langle \vv, \vv \rangle_L < 0$ and $v_0 > 0$, let $\Pi_K(\vv) = \frac{\vv}{\sqrt{K\langle \vv, \vv \rangle_L}}$
denote the unique positive rescaling satisfying $\langle \Pi_K(\vv), \Pi_K(\vv) \rangle_L = 1/K$.
\end{definition}

\begin{definition}[Radial Depth] \label{def:radial-depth} 
The radial depth of $\boldsymbol{x} \in \mathbb{H}^d_K$ is $r(\vx) = x_0$. Since $x_0 = \frac{1}{\sqrt{-K}}\cosh(\sqrt{-K}\,\rho)$ where $\rho = d_K(o,\vx)$, ordering by $x_0$ is equivalent to ordering by intrinsic hyperbolic distance from the origin.
\end{definition}
Semantically, radial depth encodes \emph{concept specificity}: general concepts should lie near the origin while fine-grained entities should have larger radii. This provides a measurable signature for evaluating whether models learn meaningful hierarchical structure.
The simplest aggregation strategy is Euclidean averaging in the ambient space followed by reprojection. However, this approach provably contracts representations toward the origin~\cite{ganea2018hyperbolic, chami2019hyperbolic}, destroying hierarchical structure encoded in radial depth. We formalize this in the following proposition.

\begin{proposition}[Euclidean Mean Contracts] 
\label{prop:euclidean-contraction} 
Let $\{\vx_i\}_{i=1}^n \subset \mathbb{H}^d_K$ with $n \geq 2$. Define the Euclidean mean $\bar{\vx} = \frac{1}{n}\sum_{i=1}^n \vx_i$ and its projection onto the hyperboloid $\vm^{\mathrm{Euc}} = \Pi_K(\bar{\vx})$. Then, we have $ r(\vm^{\mathrm{Euc}}) \leq \frac{1}{n}\sum_{i=1}^n r(\vx_i), $ with equality if and only if all $\vx_i$ are identical. 
\end{proposition}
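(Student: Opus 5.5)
The plan is to compute both sides of the inequality explicitly and reduce the claim to a single inequality about Lorentzian inner products. Since $r$ is the time coordinate and it is linear, the right-hand side equals $\frac{1}{n}\sum_i r(\vx_i) = \frac{1}{n}\sum_i x_{i,0} = \bar{x}_0$. By Definition~\ref{def:lorentz-projection}, the left-hand side is $r(\vm^{\mathrm{Euc}}) = \bar{x}_0 / \sqrt{K\langle \bar{\vx},\bar{\vx}\rangle_L}$. Because $\bar{x}_0>0$, the claim is equivalent to
\[
K\langle \bar{\vx},\bar{\vx}\rangle_L \;\geq\; 1 ,
\]
with equality exactly when all $\vx_i$ coincide. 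This inequality also certifies the hypotheses of Definition~\ref{def:lorentz-projection}: it gives $\langle\bar{\vx},\bar{\vx}\rangle_L<0$ because $K<0$, and $\bar{x}_0>0$ holds since every $x_{i,0}>0$. So $\Pi_K(\bar{\vx})$ is well defined.

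Next I expand bilinearly:
\[
K\langle \bar{\vx},\bar{\vx}\rangle_L = \frac{1}{n^2}\sum_{i,j} K\langle \vx_i,\vx_j\rangle_L .
\]
The key input is the reverse Cauchy--Schwarz inequality on the hyperboloid. From the distance formula in the preliminaries, $K\langle \vx_i,\vx_j\rangle_L = \cosh\!\big(\sqrt{-K}\,d_K(\vx_i,\vx_j)\big) \geq 1$. Equality holds iff $d_K(\vx_i,\vx_j)=0$, that is, iff $\vx_i=\vx_j$. Averaging these $n^2$ terms, each at least $1$, gives $K\langle\bar{\vx},\bar{\vx}\rangle_L\ge 1$, which proves the inequality.

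For the equality case, suppose some pair $\vx_i\neq\vx_j$. Then that term is strictly greater than $1$, so $K\langle\bar{\vx},\bar{\vx}\rangle_L>1$. Since $\bar{x}_0>0$, this yields strict inequality $r(\vm^{\mathrm{Euc}})<\bar{x}_0$. Conversely, if all $\vx_i$ are identical, then $\bar{\vx}=\vx_1$ already lies on the hyperboloid, $\Pi_K$ fixes it, and equality holds.

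The only step needing care is justifying $K\langle\vx,\vy\rangle_L\ge1$ for points on $\mathbb{H}^d_K$ without circularly relying on the distance formula. If the formula is not taken as given, I would prove it directly. Write $\vx=(x_0,\tilde{\vx})$ with $x_0=\sqrt{\|\tilde{\vx}\|^2-1/K}$, and similarly for $\vy$. Then $-\langle\vx,\vy\rangle_L = x_0y_0-\tilde{\vx}\cdot\tilde{\vy}\ge x_0y_0-\|\tilde{\vx}\|\|\tilde{\vy}\|$. Apply Cauchy--Schwarz to the vectors $(\|\tilde{\vx}\|,(-K)^{-1/2})$ and $(\|\tilde{\vy}\|,(-K)^{-1/2})$ to get $x_0y_0\ge \|\tilde{\vx}\|\|\tilde{\vy}\|-1/K$. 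Combining the two bounds gives $-\langle\vx,\vy\rangle_L\ge -1/K$. Both bounds are tight iff $\tilde{\vx}=\tilde{\vy}$, which gives the equality characterization.
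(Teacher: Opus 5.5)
Your proof is correct and matches the paper's argument essentially step for step. Like the paper, you reduce the claim to $K\langle \bar{\vx},\bar{\vx}\rangle_L \geq 1$, expand bilinearly into the $n^2$ pairwise terms, and use the pairwise bound $K\langle \vx_i,\vx_j\rangle_L \geq 1$ (the paper's Lemma~\ref{lem:lorentz-bound}) for both the inequality and the equality case. Your direct Cauchy--Schwarz derivation of that pairwise bound is a small, welcome addition, since the paper obtains it only from the distance formula.
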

The proof of this Proposition is available in Appendix~\ref{app:proof-euclidean-contraction}. This failure motivates a precise characterization of desirable pooling behavior. We formalize the requirement that pooling should preserve, rather than collapse, radial structure. 

\begin{definition}[Outward Bias]
\label{def:outward-bias}
A pooling operator $\mathcal{P}: (\mathbb{H}^d_K)^n \to \mathbb{H}^d_K$ is \emph{outward-biased} if $r(\mathcal{P}(\{\vx_i\}_{i=1}^n)) \geq \bar{r}$, where $\bar{r}$ is the weighted mean radius.
\end{definition}


A natural alternative is a weighted aggregation scheme in which token contributions are modulated by their relative importance. For example, \citet{zhu2020hypertext} adopt the Einstein midpoint, the canonical barycenter in hyperbolic space~\cite{gulcehre2018hyperbolic}, to emphasize semantically specific tokens during pooling: since points near the boundary receive higher weight via the Lorentz factor $\lambda_i = x_{i,0}$, more informative content should dominate the aggregate. However, we show this intuition is misleading: the implicit radial weighting is fundamentally insufficient to counteract hyperbolic contraction at the document level.

\begin{proposition}[Implicit Radial Weighting is Insufficient]
\label{prop:einstein-insufficient}
The Einstein midpoint weights points by the Lorentz factor $\lambda_i = x_{i,0}$, but this weighting grows as $\exp(\sqrt{-K}\rho)$ while hyperbolic volume grows as $\exp((d-1)\sqrt{-K}\rho)$. Specifically, for a point $\vx \in \mathbb{H}^d_K$ at hyperbolic distance $\rho$ from the origin $o = (1/\sqrt{-K}, 0, \ldots, 0)$, we have
\begin{equation*}
x_0 = \frac{1}{\sqrt{-K}}\cosh\!\left(\sqrt{-K}\,\rho\right) \sim \frac{1}{2\sqrt{-K}}\exp\!\left(\sqrt{-K}\,\rho\right) \quad
\end{equation*}
as $\rho \to \infty$. Thus, the Lorentz factor weighting undercompensates for the exponential growth of hyperbolic balls at large radii by a factor of $\exp\!\left((d-2)\sqrt{-K}\,\rho\right)$.
\end{proposition}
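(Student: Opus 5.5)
The plan is to verify the two asymptotic statements separately and then compare them. Throughout, write $c = \sqrt{-K}$.

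\textbf{Step 1: the Lorentz factor.} For the origin $o = (1/c, 0, \ldots, 0)$ and any $\vx \in \mathbb{H}^d_K$, we have $\langle o, \vx\rangle_L = -x_0/c$. The distance formula from Section~\ref{sec:lorentz_preliminaries} then gives
\[
c\,\rho = \cosh^{-1}\!\left(K\langle o,\vx\rangle_L\right) = \cosh^{-1}\!\left(c\, x_0\right),
\]
since $-K/c = c$. Hence $x_0 = \cosh(c\rho)/c$, which matches Definition~\ref{def:radial-depth}. Writing $\cosh(c\rho) = \tfrac12\left(e^{c\rho} + e^{-c\rho}\right)$ and observing that the second term vanishes relative to the first yields $x_0 \sim \frac{1}{2c}e^{c\rho}$ as $\rho \to \infty$.

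\textbf{Step 2: the volume growth.} I would use geodesic polar coordinates about $o$. In these coordinates the hyperbolic metric of curvature $K$ is
\[
d\rho^2 + \left(\frac{\sinh(c\rho)}{c}\right)^2 g_{S^{d-1}}.
\]
The area of the geodesic sphere of radius $\rho$ is therefore $\omega_{d-1}\left(\sinh(c\rho)/c\right)^{d-1}$, where $\omega_{d-1}$ is the area of the unit sphere. Integrating in $\rho$ gives the ball volume $V(\rho) = \omega_{d-1}\int_0^\rho \left(\sinh(ct)/c\right)^{d-1}dt$. For $d \ge 2$, using $\sinh(ct) \sim e^{ct}/2$ together with L'H\^opital's rule (or direct integration of the leading exponential) gives
\[
V(\rho) \sim \frac{\omega_{d-1}}{(d-1)\,c\,(2c)^{d-1}}\, e^{(d-1)c\rho}.
\]
The sphere area has the same $e^{(d-1)c\rho}$ rate, so the claim holds whichever of the two is meant by ``volume''.

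\textbf{Step 3: the ratio.} Dividing the volume asymptotic by the weight asymptotic gives $V(\rho)/x_0 = \Theta\!\left(e^{(d-2)c\rho}\right)$, which is the stated undercompensation factor $\exp((d-2)\sqrt{-K}\rho)$ up to constants. For $d > 2$ this factor diverges. For $d = 2$ it is bounded, and the statement should be read as an asymptotic rate comparison rather than a divergence claim.

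\textbf{Main obstacle.} The calculations themselves are routine. The real difficulty is interpreting ``undercompensates'': the proposition is a comparison of growth rates, not a precise inequality. To make this rigorous I would state it as $\lim_{\rho\to\infty} e^{-(d-2)c\rho}\,V(\rho)/x_0 = C_d \in (0,\infty)$ and compute $C_d$ explicitly from the constants in Steps 1 and 2.
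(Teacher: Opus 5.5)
Your proposal is correct and follows the paper's proof essentially step for step. The paper makes the same three moves: it computes $K\langle o,\vx\rangle_L=\sqrt{-K}\,x_0$ to obtain $x_0=\cosh(\sqrt{-K}\rho)/\sqrt{-K}$, it uses the same polar-coordinate ball-volume integral of $\sinh^{d-1}$ with $\sinh t\sim e^t/2$, and it takes the same ratio to get $\exp((d-2)\sqrt{-K}\rho)$, which diverges for $d\ge 3$. Your explicit constant and your remark that the comparison holds only up to constants are slightly more careful than the paper, which hides those constants inside its $\sim$.
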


These results establish that neither Euclidean averaging nor the standard Einstein midpoint satisfies the outward-bias property required for hierarchy-preserving aggregation. This motivates the design of a pooling operator with explicit radial amplification. The proof of this Proposition is available in Appendix~\ref{app:proof-einstein-insufficient}.

\subsection{Outward Einstein Midpoint Pooling}
\label{sec:outward_einstein}

\begin{figure}[t]
    \centering
    \includegraphics[width=\linewidth]{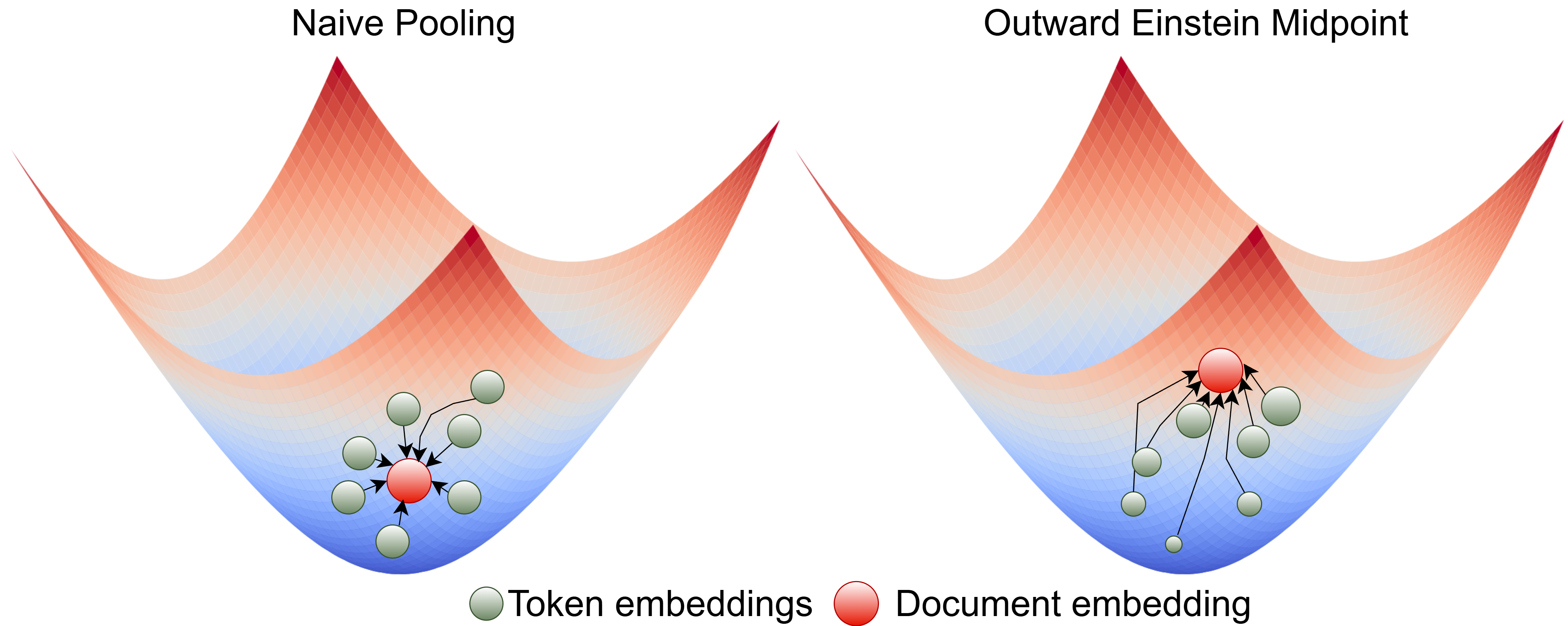}
    \caption{Outward Einstein Midpoint. Size of token shows its contribution towards aggregation.}
    \label{fig:pooling}
\end{figure}

To mitigate radial contraction during aggregation, we introduce the \emph{Outward Einstein Midpoint}, a geometry-aware pooling operator that explicitly amplifies the contribution of tokens with larger hyperbolic radius. Let $\{\vx_i\}_{i=1}^n \subset \mathbb{H}^d_K$ denote a sequence of token embeddings, with optional attention weights $w_i \geq 0$, and $\lambda_i$ denoting the Lorentz factors.
 We define a radius-dependent weighting function
\[
\phi_p(x_i) = x_{i,0}^{\,p}, \qquad p > 0,
\]
which is monotone in the radial coordinate. The Outward Einstein Midpoint is then given by
\[
\vm^{\mathrm{OEM}}_{K,p}
=
\frac{
\sum_{i=1}^n \left(w_i \, \phi_p(\vx_i)\right)\lambda_i \vx_i
}{
\sum_{i=1}^n \left(w_i \, \phi_p(\vx_i)\right)\lambda_i
},
\]
followed by reprojection onto the hyperboloid $\mathbb{H}^d_K$.

As shown in Figure~\ref{fig:pooling}, by construction, this operator assigns disproportionately higher weight to tokens located farther from the origin, counteracting the contraction inherent to naïve averaging. We now establish theoretical guarantees for the Outward Einstein Midpoint, showing that it systematically improves upon the standard Einstein midpoint in preserving radial structure.

\begin{theorem}[OEM Pre-Projection Bound]
\label{thm:pre-projection}
Let $\tilde{\vv} = \sum_{i=1}^n \tilde{w}_i \vx_i$ where $\tilde{w}_i \propto w_i x_{i,0}^{p+1}$ are the normalized OEM weights. Then, for $p \geq 0$, we have
\[
\tilde{v}_0 = \frac{\sum_{i=1}^n w_i x_{i,0}^{p+2}}{\sum_{i=1}^n w_i x_{i,0}^{p+1}} \geq \frac{\sum_{i=1}^n w_i x_{i,0}}{\sum_{i=1}^n w_i} = \bar{r}_w.
\]
\end{theorem}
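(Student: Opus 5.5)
The equality for $\tilde v_0$ holds by definition. Normalizing the OEM weights gives $\tilde w_i = w_i x_{i,0}^{p+1}/\sum_j w_j x_{j,0}^{p+1}$, and taking the zeroth coordinate of $\tilde{\vv}=\sum_i \tilde w_i \vx_i$ produces the stated ratio. The real content is the inequality. Writing $a_i = x_{i,0}$, it reads
\[
\Big(\sum_i w_i a_i^{p+2}\Big)\Big(\sum_i w_i\Big) \;\geq\; \Big(\sum_i w_i a_i^{p+1}\Big)\Big(\sum_i w_i a_i\Big),
\]
after clearing the positive denominators. The weights satisfy $w_i\ge 0$, and we assume not all $w_i$ vanish. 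We also have $a_i \geq 1/\sqrt{-K} > 0$ on $\mathbb{H}^d_K$, so every denominator is strictly positive.

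I would prove this displayed inequality with Chebyshev's sum inequality, via the symmetrization trick. Expand the difference of the two sides as a double sum over $(i,j)$ and symmetrize in $i$ and $j$:
\[
\sum_{i,j} w_i w_j\big(a_i^{p+2} - a_i^{p+1}a_j\big)
= \tfrac12\sum_{i,j} w_i w_j\,\big(a_i^{p+1}-a_j^{p+1}\big)\big(a_i-a_j\big).
\]
For $p\ge 0$ the map $t\mapsto t^{p+1}$ is nondecreasing on $(0,\infty)$. Hence the factors $a_i^{p+1}-a_j^{p+1}$ and $a_i-a_j$ always have the same sign, each summand is nonnegative, and the inequality follows.

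An alternative framing divides through instead. Define the probability weights $\pi_i = w_i/\sum_j w_j$. The claim then says $\mathbb{E}_\pi[a^{p+2}]/\mathbb{E}_\pi[a^{p+1}] \ge \mathbb{E}_\pi[a]$, which is the covariance inequality $\mathrm{Cov}_\pi(a^{p+1},a)\ge 0$ for two comonotone functions of the same variable. I would present the symmetrized sum, since it is self-contained.

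I do not expect a real obstacle. The only points needing care are positivity of the denominators, which relies on $a_i>0$ and at least one $w_i>0$, and the role of the hypothesis $p\ge 0$: it guarantees monotonicity of $t^{p+1}$, and in fact $p>-1$ would suffice. I would also record the equality case. Equality holds if and only if all $a_i$ with $w_i>0$ coincide, since for $p+1>0$ each summand vanishes only when $a_i=a_j$. This mirrors the equality condition in Proposition~\ref{prop:euclidean-contraction}.
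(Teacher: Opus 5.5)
Your proposal is correct and follows the paper's proof: the paper applies Chebyshev's sum inequality with weights $w_i$ to the co-monotone sequences $x_{i,0}^{p+1}$ and $x_{i,0}$, and your symmetrized double sum is the standard proof of that inequality written out. Your equality condition is slightly sharper than the paper's "all $x_{i,0}$ identical", since it only requires the $x_{i,0}$ with $w_i>0$ to coincide.
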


We apply Chebyshev's sum inequality to the co-monotonic sequences $a_i = x_{i,0}^{p+1}$ and $b_i = x_{i,0}$ to prove this. Full proof can be found in Appendix~\ref{app:proof-pre-projection}.
While projection onto $\mathbb{H}^d_K$ contracts the radial coordinate, the OEM's concentration of weight on high-radius tokens inflates the pre-projection average, counteracting this effect. Theorem~\ref{thm:pre-projection} establishes that OEM increases the pre-projection radial coordinate. The following theorem shows a stronger result: OEM provably dominates the standard Einstein midpoint in preserving radial structure.

\begin{theorem}[OEM Outward Bias]
\label{thm:outward-bias}
Let $\vm^{\mathrm{Ein}}_K$ denote the standard Einstein midpoint ($p = 0$) and $\vm^{\mathrm{OEM}}_{K,p}$ the Outward Einstein Midpoint. Then, for all $p \geq 1,r(\vm^{\mathrm{OEM}}_{K,p}) \geq r(\vm^{\mathrm{Ein}}_K).$

\end{theorem}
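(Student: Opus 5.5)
The plan is to reduce the comparison of radial depths after reprojection to a comparison of Euclidean norms in Klein coordinates, and then control how that norm changes with $p$.

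\textbf{Step 1: radial depth after projection.} For any pre-projection vector $\vv$ with $v_0>0$ and $\langle \vv,\vv\rangle_L<0$, Definition~\ref{def:lorentz-projection} gives $K\langle \vv,\vv\rangle_L = -K\,(v_0^2-\|\vv_{[1:d]}\|^2)$. Hence
\[
r(\Pi_K(\vv)) = \frac{1}{\sqrt{-K}}\,\Bigl(1-\|\vv_{[1:d]}\|^2/v_0^2\Bigr)^{-1/2}.
\]
This is strictly increasing in the ratio $\|\vv_{[1:d]}\|/v_0$ and invariant under positive rescaling of $\vv$. So the normalization in the OEM definition is irrelevant, and it suffices to compare $\|\vu\|$ for $\vu = \vv_{[1:d]}/v_0$.

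\textbf{Step 2: rewrite as a weighted Klein average.} Set $\vk_i = \vx_{i,[1:d]}/x_{i,0}$, the Klein coordinates of $\vx_i$ (up to curvature scaling), so that $\|\vk_i\|<1$. For the OEM with exponent $p$, the unnormalized pre-projection vector is $\sum_i w_i x_{i,0}^{p+1}\vx_i$. Therefore
\[
\vu(p)=\sum_i \beta_i(p)\,\vk_i, \qquad \beta_i(p)=\frac{w_i x_{i,0}^{p+2}}{\sum_j w_j x_{j,0}^{p+2}}.
\]
The Einstein midpoint corresponds to $p=0$. The claim then becomes $\|\vu(p)\|\ge\|\vu(0)\|$ for $p\ge 1$, and I would aim to show that $p\mapsto\|\vu(p)\|^2$ is nondecreasing.

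\textbf{Step 3: differentiate in $p$.} Since $\partial_p\beta_i = \beta_i(\ell_i-\bar\ell_\beta)$ with $\ell_i=\log x_{i,0}$, we get
\[
\tfrac{d}{dp}\|\vu\|^2 = 2\,\mathrm{Cov}_{\beta(p)}\bigl(\ell_i,\ \langle \vu(p),\vk_i\rangle\bigr).
\]
Nonnegativity follows from Chebyshev's sum inequality, as in Theorem~\ref{thm:pre-projection}, whenever $\ell_i$ and $\langle\vu,\vk_i\rangle$ are co-monotone. That is, it follows when tokens farther from the origin are also the ones better aligned with the current aggregate direction.

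\textbf{Step 4: the main obstacle.} The real difficulty is Step 3's sign. In general the covariance need not be nonnegative: high-radius tokens pointing in opposing directions cancel in the spatial part, so upweighting them can shrink $\|\vu\|$. Symmetric configurations show that equality, at least, can occur. I would first try to exploit $p\ge1$ together with $\|\vk_i\|\to 1$ as $x_{i,0}$ grows, which gives $\langle\vu,\vk_i\rangle$ a radius-dependent upper range, and bound the negative part of the covariance by the positive contribution from the growth of $\|\vk_i\|$. If that fails, I would state and use the co-monotonicity between radius and alignment as an explicit hypothesis. This hypothesis is natural for a cluster of tokens lying along a hierarchy branch. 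Under it, Chebyshev's inequality closes the argument directly by comparing $\beta(p)$ with $\beta(0)$, without differentiating.
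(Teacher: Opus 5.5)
\textbf{Verdict: the proposal does not prove the statement, and no argument can, because the statement is false.} Your Step 4 scenario is a genuine counterexample, not a technical obstacle. Your Steps 1--3 are correct: after $\Pi_K$, the radial depth is increasing in the Klein norm $\|\mathbf{u}(p)\|$ with $\beta_i(p)\propto w_i x_{i,0}^{p+2}$, and your covariance formula for $\tfrac{d}{dp}\|\mathbf{u}\|^2$ is right.

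Here is an explicit instance. Take $K=-1$, $d=2$, $w_i=1$, $\vx_1=(2,\sqrt{3},0)$, $\vx_2=(2,-\sqrt{3},0)$, $\vx_3=(\sqrt{2},0,1)$.
\begin{itemize}
\item The Einstein pre-projection vector (weights $x_{i,0}$) is proportional to $(10,0,\sqrt{2})$. Hence $r(\vm^{\mathrm{Ein}}_K)=10/\sqrt{98}\approx 1.0102$.
\item The $p=1$ OEM vector (weights $x_{i,0}^2$) is proportional to $(16+2\sqrt{2},0,2)$. Hence $r(\vm^{\mathrm{OEM}}_{K,1})=(16+2\sqrt{2})/\sqrt{(16+2\sqrt{2})^2-4}\approx 1.0057$.
\item As $p\to\infty$, the weight concentrates equally on the antipodal pair. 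Then $\mathbf{u}(p)\to 0$ and the OEM converges to the origin.
\end{itemize}
In your notation, $\langle \mathbf{u}(p),\mathbf{k}_i\rangle$ is zero on the two high-radius tokens and positive on the low-radius one. So the covariance in Step 3 is strictly negative for every $p$. The growth of $\|\mathbf{k}_i\|$ cannot help, because the high-radius tokens are orthogonal to the aggregate. Your first fallback is therefore doomed. Your remark that symmetric configurations give ``equality, at least'' understates the situation: strict reversal occurs.

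The paper's proof misses this. It writes $r^2=(v_0)^2/(K\langle \vv,\vv\rangle_L)$ and correctly shows via Chebyshev that the numerator is nondecreasing. It then asserts that $K\langle \vv,\vv\rangle_L=\sum_{i,j}\alpha_i\alpha_j K\langle \vx_i,\vx_j\rangle_L$ decreases ``as weights concentrate on fewer points''. That step is unjustified. Moving mass onto several mutually distant high-radius points increases this dispersion term. In the example, $K\langle \vx_1,\vx_2\rangle_L=7$, and the term rises from about $3.34$ at $p=0$ to about $3.51$ at $p=1$, outpacing the numerator. Your diagnosis is sharper than the paper's argument.

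What is provable is your second fallback, stated as an explicit hypothesis. Suppose $x_{i,0}$ and $\langle \mathbf{k}_i,\mathbf{u}(0)\rangle$ are similarly ordered. Since $\beta(p)$ is $\beta(0)$ tilted by the increasing factor $x_{i,0}^{p}$, Chebyshev gives
\[
\|\mathbf{u}(p)\|\,\|\mathbf{u}(0)\|\;\ge\;\sum_i \beta_i(p)\langle \mathbf{k}_i,\mathbf{u}(0)\rangle\;\ge\;\sum_i \beta_i(0)\langle \mathbf{k}_i,\mathbf{u}(0)\rangle\;=\;\|\mathbf{u}(0)\|^2 .
\]
This yields the inequality for every $p>0$; the restriction $p\ge 1$ plays no role. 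You should present this as a corrected, conditional theorem, not as a proof of the stated one.
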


The OEM weights $\tilde{w}_i \propto w_i x_{i,0}^{p+1}$ concentrate more mass on high-radius points than the Einstein weights $w_i x_{i,0}$, increasing the pre-projection time component while reducing pairwise dispersion. Full proof in Appendix~\ref{app:proof-outward-bias}.
Together, these results establish that the Outward Einstein Midpoint provably preserves hierarchical structure during aggregation, in contrast to both Euclidean averaging and the standard Einstein midpoint. We validate this empirically through concept-level hierarchy analysis (Section~\ref{sec:results}), showing that models using OEM pooling maintain monotonically increasing radii across semantic specificity levels—a property absent in Euclidean baselines.

\subsection{Training Methodology}
\label{sec:training}
Following~\citep{nussbaum2024nomic}, we train the hyperbolic encoder in three stages, with all objectives operating directly on the Lorentz manifold using geodesic-based similarity.

\noindent\textbf{Stage 1: Hyperbolic Masked Language Modeling.}
We initialize via masked language modeling (MLM), following the standard BERT objective in hyperbolic space. Contextualization is performed through hyperbolic self-attention, with all intermediate representations on the hyperboloid. Predictions are produced using a Lorentzian multinomial logistic regression (LorentzMLR)~\cite{bdeir2024fully} head, which defines class logits via Lorentzian inner products. Only HyTE-FH is trained on MLM, while for HyTE-H we choose a pre-trained Euclidean model as the MLM base to leverage a sronger initialization in low-resource settings. 

\noindent\textbf{Stage 2: Contrastive Pre-Training.} We initialize the embedding model from the MLM model by adding the OEM pooler, and then train it on a large collection of general-domain query--document pairs. This stage encourages the model to learn representations that distinguish relevant documents from irrelevant ones. We use the standard InfoNCE loss~\citep{oord2018representation}, with similarity defined as the negative geodesic distance $s(q, d) = -\, d_K(q, d).$
For a batch $\mathcal B$, the contrastive loss is
\[
\mathcal{L}_{\mathrm{ctr}}
=
-\sum_{i \in \mathcal B}
\log
{
\frac{\exp\!\left( s(\vq_i, \vd_i) / \tau \right)}{\sum_{j \in \mathcal B}\exp\!\left( s(\vq_i, \vd_j) / \tau \right)}
},
\]
where $\tau > 0$ is a temperature and $B$ is the batch size. Following~\citet{nussbaum2025training}, we use a large batch size of 16384 at this stage.

\noindent\textbf{Stage 3: Contrastive Learning Fine-tuning.}
In the final stage of training, we further fine-tune the embedding model using curated query--document data. For each training sample $i$, we augment the training data with hard negatives $\mathcal H_i$ mined following~\citet{nussbaum2025training}. The contrastive loss function becomes

\[
\mathcal{L}_{\mathrm{ctr}}
=
-\sum_{i \in \mathcal B}
\log
{
\frac{\exp\!\left( s(\vq_i, \vd_i) / \tau \right)}{\sum_{j \in \mathcal B \cup \mathcal H_i}\exp\!\left( s(\vq_i, \vd_j) / \tau \right) }
}.
\]
This stage refines retrieval behavior beyond unsupervised co-occurrence structure.

\noindent\textbf{Retrieval-Augmented Generation.}
At inference time, the trained hyperbolic encoder is used to retrieve the top-$k$ documents $\cC$ for a given query. These retrieved documents are then provided as context to a downstream generative language model. Prompt formatting and generation follow standard practice and are provided in Appendix~\ref{app:prompt}. 

\noindent\textbf{Approximate Nearest Neighbor Search.}
Practical dense retrieval requires approximate nearest neighbor (ANN) indexing, raising a natural concern: does standard ANN infrastructure, built around Euclidean inner products, apply to hyperbolic embeddings? It does, with no custom distance functions or index modifications. Hyperbolic nearest-neighbor search in the Lorentz model reduces cleanly to maximum inner product search (MIPS) over a simple transformation of the embeddings. The geodesic distance (Section~\ref{sec:lorentz_preliminaries}), $d_K(x,y) = \frac{1}{\sqrt{-K}} \operatorname{arccosh}(K\langle x,y\rangle_L)$, is monotone in the Lorentzian inner product $\langle x,y\rangle_L$, so ranking by hyperbolic proximity is equivalent to ranking by $\langle x,y\rangle_L$. Writing $x = (x_0, \mathbf{x})$ and $y = (y_0, \mathbf{y})$ with the time coordinate first gives $\langle x,y\rangle_L = -x_0 y_0 + \mathbf{x}^\top \mathbf{y}$; negating each document's time coordinate ($x_0 \to -x_0$) turns this into the standard dot product $x_0 y_0 + \mathbf{x}^\top \mathbf{y}$, reducing hyperbolic ranking to MIPS on the sign-flipped embeddings. Exact retrieval thus runs through FAISS \texttt{IndexFlatIP} unmodified, and graph-based methods such as HNSW work out of the box: HNSW builds its proximity graph from pairwise comparisons alone, and each comparison on the sign-flipped embeddings correctly evaluates the Lorentzian inner product, so the graph topology faithfully reflects hyperbolic neighborhoods. HyTE is therefore a drop-in replacement for Euclidean encoders in any existing retrieval stack. Runtime and computational complexity appear in Appendix~\ref{app:complexity}.

%% file: sections/5-results.tex
\section{Experiments and Results}\label{sec:experiments}

\begin{table}[t]
\centering
\caption{Performance on MTEB benchmark. We report mean scores across tasks and task types. HyTE-FH performs best among the three models.}
\label{tab:mteb_results}
\resizebox{0.8\linewidth}{!}{
\begin{tabular}{l|c|c}
\toprule
\textbf{Model} & \textbf{Mean (Task)} & \textbf{Mean (TaskType)} \\
\midrule
EucBERT & 54.11 & 51.31 \\
$\text{HyTE-H}^{\text{Euc}}$ & \underline{54.57} & \underline{53.71} \\
HyTE-FH & \textbf{56.41} & \textbf{53.75} \\
\bottomrule
\end{tabular}
}
\end{table}

\begin{table*}[!t]
\centering
\caption{RAG benchmark results comparing our model variants.}
\label{tab:rag_our_models}
\resizebox{\textwidth}{!}{
\begin{tabular}{l|ccc|ccc|ccc|ccc|ccc|ccc}
\toprule
& \multicolumn{3}{c|}{\textbf{Average}} & \multicolumn{3}{c|}{\textbf{CovidQA}} & \multicolumn{3}{c|}{\textbf{Cuad}} & \multicolumn{3}{c|}{\textbf{Emanual}} & \multicolumn{3}{c|}{\textbf{DelucionQA}} & \multicolumn{3}{c}{\textbf{ExpertQA}} \\
\cmidrule(lr){2-4} \cmidrule(lr){5-7} \cmidrule(lr){8-10} \cmidrule(lr){11-13} \cmidrule(lr){14-16} \cmidrule(lr){17-19}
\textbf{Model} & F & CR & AR & F & CR & AR & F & CR & AR & F & CR & AR & F & CR & AR & F & CR & AR\\
\midrule
EucBERT & 0.596 & 0.798 & 0.647 & 0.685 & 0.863 & 0.582 & 0.654 & 0.644 & 0.641 & 0.642 & 0.646 & 0.674 & 0.525 & \textbf{0.968} & 0.679 & 0.475 & \underline{0.872} & 0.662\\
$\text{HyTE-H}^{\text{Euc}}$ & \underline{0.706} & \underline{0.814} & \underline{0.739} & \underline{0.708} & \underline{0.868} & \underline{0.668} & \textbf{0.787} & \underline{0.652} & \underline{0.710} & \textbf{0.679} & \textbf{0.835} & \textbf{0.814} & \underline{0.737} & 0.857 & \underline{0.773} & \underline{0.623} & 0.859 & \underline{0.728}\\
HyTE-FH & \textbf{{0.732}} & \textbf{0.848} & \textbf{0.765} & \textbf{{0.764}} & \textbf{{0.916}} & \textbf{0.694} & \underline{0.747} & \textbf{0.674} & \textbf{0.752} & \underline{0.660} & \underline{0.807} & \underline{0.704} & \textbf{{0.789}} & \underline{0.906} & \textbf{0.861} & \textbf{0.702} & \textbf{0.936} & \textbf{0.814} \\
\bottomrule
\end{tabular}
}
\vspace{0.5em}
{\footnotesize F = Faithfulness, CR = Context Relevance, AR = Answer Relevance. Best results in bold.}
\end{table*}

\begin{table*}[t]
\centering
\caption{RAG benchmark results comparing our hybrid model with state-of-the-art embedding models. HyTE-H demonstrates competitive performance particularly in context relevance and answer relevance.}
\label{tab:rag_sota}
\resizebox{\textwidth}{!}{
\begin{tabular}{l|ccc|ccc|ccc|ccc|ccc|ccc}
\toprule
& \multicolumn{3}{c|}{\textbf{Average}} & \multicolumn{3}{c|}{\textbf{CovidQA}} & \multicolumn{3}{c|}{\textbf{Cuad}} & \multicolumn{3}{c|}{\textbf{Emanual}} & \multicolumn{3}{c|}{\textbf{DelucionQA}} & \multicolumn{3}{c}{\textbf{ExpertQA}} \\
\cmidrule(lr){2-4} \cmidrule(lr){5-7} \cmidrule(lr){8-10} \cmidrule(lr){11-13} \cmidrule(lr){14-16} \cmidrule(lr){17-19}
\textbf{Model} & F & CR & AR & F & CR & AR & F & CR & AR & F & CR & AR & F & CR & AR & F & CR & AR\\
\midrule
ModernBert*	& 0.617 & 0.748 & 0.632 & 0.656 & 0.895 & 0.537 & 0.632 & 0.709 & 0.746 & 0.567 & 0.715 & 0.639 & 0.655 & 0.665 & 0.518 & 0.575	& 0.758 &  0.718 \\
GTE & 0.659 & 0.701 & 0.650 & 0.695 & 0.840 & 0.538 & 0.733 & 0.599 & 0.779 & 0.546 & 0.608 & 0.686 & 0.648 & 0.725 & 0.549 & 0.672 & 0.731 & 0.698 \\
Gemma & 0.603 & 0.735 & 0.684 & 0.685 & 0.760 & 0.497 & 0.724 & 0.600 & 0.778 & 0.555 & 0.884 & 0.687 & 0.612 & 0.643 & 0.705 & 0.442& 0.791 & 0.755 \\
KaLM-mini-v1 & 0.624 & 0.719 & 0.591 & 0.656 & 0.787 & 0.528 & 0.742 & \underline{0.789} & 0.716 & 0.565 & 0.776 & 0.616 & 0.553 & 0.581 & 0.573 & 0.607 & 0.666 & 0.522 \\
Qwen-Embedding & \underline{0.784} & \underline{0.907} & 0.814 & \underline{0.808} & \underline{0.955} & \underline{0.747} & \underline{0.825} & 0.738 & \underline{0.806} & \textbf{0.838} & \textbf{0.973} & \underline{0.852} & 0.701 & 0.881 & \underline{0.840} & \textbf{0.748} & \textbf{0.988} & 0.824 \\
$\text{HyTE-H}^{\text{bert}}$ & {0.763} & {0.904} & \underline{0.832} & {0.797} & {0.974} & \textbf{{0.755}} & {0.760} & {0.683} & {0.804} & {0.688} & {0.943} & \textbf{0.899} & \textbf{0.829} & \underline{0.965} & \textbf{0.871} & \underline{0.739} & \underline{0.958} &\underline{0.834} \\
$\text{HyTE-H}^{\text{Qwen}}$ & \textbf{0.803} & \textbf{0.934} & \textbf{0.845} & \textbf{0.832} & \textbf{1.000} & \underline{0.747} & \textbf{0.886} & \textbf{0.855} & \textbf{0.960} & \underline{0.826} & \underline{0.963} & 0.764 & \underline{0.809} & \textbf{0.987} & 0.822 & 0.666 & 0.863 & \textbf{0.931} \\
\bottomrule
\end{tabular}
}
\vspace{0.5em}
{\footnotesize F = Faithfulness, CR = Context Relevance, AR = Answer Relevance. Best results in bold.}
\end{table*}
\subsection{Experimental Setup}

\noindent\textbf{Datasets.} We pre-train our models using publicly available corpora following the data curation and filtering protocols introduced in nomic-embed~\cite{nussbaum2024nomic}. For masked language modeling (MLM), we use the high-quality 2023 Wikipedia dump, which provides broad topical coverage and long-form text suitable for learning general-purpose semantic representations. For contrastive pre-training, we leverage approximately 235 million text pairs curated and filtered as described in~\cite{nussbaum2024nomic}, designed to encourage semantic alignment across paraphrases and related content at scale. Finally, for task-specific fine-tuning, we use the training splits of the BEIR benchmark~\cite{thakur2021beir}, which comprises a diverse collection of retrieval tasks spanning multiple domains and query styles.

\noindent\textbf{Evaluation Benchmarks.} We evaluate our approach on two complementary benchmarks: (1) the Massive Text Embedding Benchmark (MTEB)~\cite{muennighoff2022mteb} to assess embedding quality across diverse tasks, and (2) RAGBench~\cite{ragbench} for end-to-end RAG system evaluation. In MTEB, we particularly use the English part of the benchmark. RAGBench evaluates RAG systems on domain-specific question-answering datasets including CovidQA, Cuad, Emanual, DelucionQA, and ExpertQA.

\noindent\textbf{Baselines.} We adopt different baseline strategies for our two models based on their training paradigms. For HyTE-FH, which is pre-trained from scratch, we train a fully Euclidean equivalent called EucBERT using the same architecture and training setup. This controlled comparison isolates the contribution of hyperbolic geometry. We also evaluate $\text{HyTE-H}^{\text{Euc}}$, a hybrid hyperbolic model initialized with EucBERT. The three models are evaluated on MTEB and RAGBench. For $\text{HyTE-H}^{\text{bert}}$, which is fine-tuned with modernbert-base~\cite{modernbert} as base model, we compare against state-of-the-art embedding models smaller than 500M parameters, including gte-multilingual-base~\cite{gte}, KaLM-embedding-multilingual-mini-v1~\cite{kalm}, and embeddinggemma-300m~\cite{gemma}.

\noindent\textbf{Metrics.} For MTEB, we report mean scores across tasks and task types. For RAG evaluation, we measure three key metrics using RAGAS~\cite{es2024ragas}: (1) \textit{Faithfulness}, which assesses whether generated answers are grounded in the retrieved context; (2) \textit{Context Relevance}, which measures how relevant the retrieved documents are to the query; and (3) \textit{Answer Relevance}, which evaluates how well the generated answer addresses the user's question.

\noindent\textbf{Implementation.} We implement all hyperbolic models using HyperCore~\cite{he2025hypercore} and train on NVIDIA H100 GPUs. All three models, HyTE-FH, HyTE-H, and EucBERT, share the same architecture, each containing 149M parameters with 12 transformer layers and 768-dimensional embeddings. For generation and judging, we use Llama-3.1-8B-Instruct~\cite{weerawardhena2025llama}. For RAG benchmarks, we fix the retrieval context window size to 5 for all models to ensure a controlled comparison; we additionally report ablations with larger context sizes in Appendix Table~\ref{tab:context_size}. 

\subsection{Results}\label{sec:results}
\noindent\textbf{MTEB Benchmark.}
Table~\ref{tab:mteb_results} reports performance on the MTEB benchmark. HyTE-FH achieves the highest mean score across tasks (56.41), outperforming both EucBERT (54.11) and $\text{HyTE-H}^{\text{Euc}}$ (54.57). On the task-type mean, HyTE-FH and $\text{HyTE-H}^{\text{Euc}}$ perform comparably (53.75 and 53.71, respectively), with both surpassing EucBERT (51.31). These results demonstrate that hyperbolic representations not only improve RAG retrieval but also remain competitive on general-purpose embedding benchmarks. We present task-wise results in Table~\ref{tab:mteb_full}.

\begin{figure*}[t]
\centering
\includegraphics[width=\textwidth]{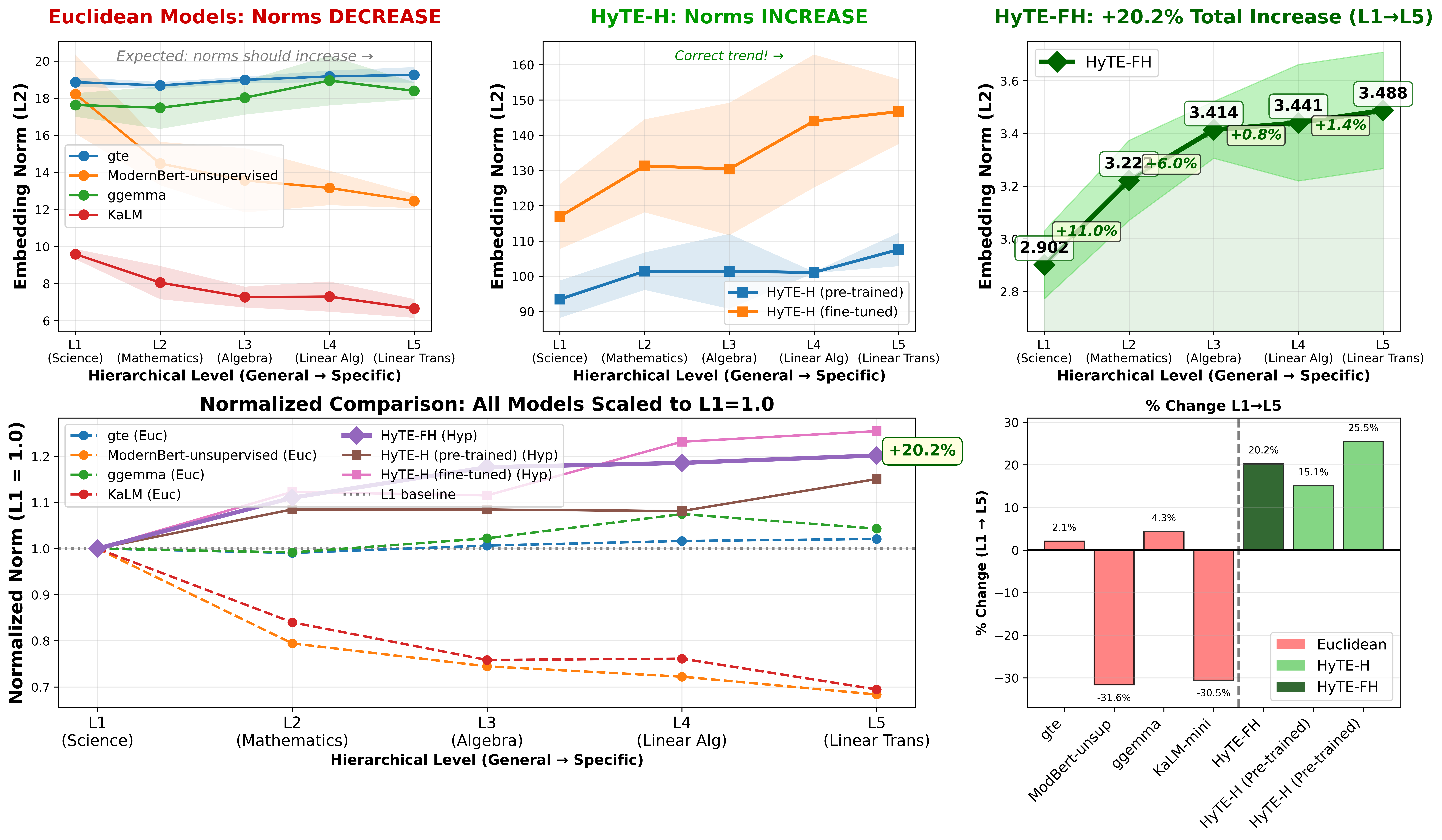}
\caption{\textbf{Empirical validation of hierarchical encoding.} \textbf{Left:} Euclidean models show flat or decreasing norms. \textbf{Middle:} HyTE-H demonstrate increasing norms with fine-tuning enhancing this trend. \textbf{Right:} HyTE-FH achieves +20.2\% total increase from L1 to L5. \textbf{Bottom:} Normalized comparison and percent change summary highlighting the contrasting behaviors of different geometric approaches.}
\label{fig:document_norms}
\end{figure*}

\noindent\textbf{RAG Benchmark Results.}
Table~\ref{tab:rag_our_models} presents RAG benchmark results across five datasets. HyTE-FH achieves the best average performance across all three metrics: faithfulness (0.732), context relevance (0.848), and answer relevance (0.765). $\text{HyTE-H}^{\text{Euc}}$ ranks second overall, with both hyperbolic variants substantially outperforming EucBERT. On individual datasets, HyTE-FH leads on CovidQA, Cuad, DelucionQA, and ExpertQA, while $\text{HyTE-H}^{\text{Euc}}$ achieves the best context and answer relevance on Emanual. These results demonstrate that hyperbolic geometry consistently improves retrieval quality for RAG across diverse domains. 

Table~\ref{tab:rag_sota} reports RAG performance across five datasets. $\text{HyTE-H}^{\text{bert}}$ consistently outperforms strong Euclidean embedding baselines across all metrics, with particularly large gains in context relevance and answer relevance. These improvements indicate that hyperbolic representations are more effective at retrieving structurally relevant evidence, which is critical for downstream generation quality in RAG pipelines. In qualitative case studies shows in Appendix~\ref{sec:case_study}, we observe that Euclidean models frequently fail to retrieve key supporting passages altogether, whereas hyperbolic model recover relevant evidence more reliably, leading to more faithful and contextually grounded answers.

\noindent\textbf{Concept-Level Hierarchy Analysis.}
A central motivation for hyperbolic embeddings is their capacity to preserve hierarchical relationships (Section~\ref{sec:pooling_failure}). To understand how models capture document hierarchy, we analyze learned radii (distances from the origin in the Poincaré ball) across five hierarchical levels: from Level 1 (most general, e.g., document-level topics) to Level 5 (most specific, e.g., fine-grained entities). Figure~\ref{fig:document_norms} presents these results. The fully hyperbolic model demonstrates clear hierarchical organization with radii increasing monotonically from Level 1 (2.902) to Level 5 (3.488, +20.2\%). This shows the model naturally places general concepts near the origin and specific details toward the boundary, consistent with hyperbolic geometry, where proximity to the origin represents generality. Euclidean models show flat or decreasing distributions. Baselines maintain constant norms across levels or decreases norm by 30\%, reflecting inverted structure. Hybrid models exhibit substantially larger radii from the hyperbolic component. The fine-tuned hybrid increases from 116.9 to 146.7, showing that fine-tuning induces structured hierarchy. We have attached the dataset for this case study in the supplementary material. The concept level hierarchy data is available in Appendix ~\ref{app:concept_hierarchy}.

\noindent\textbf{ANN Retrieval on MS MARCO.}
To confirm the MIPS reduction yields practical gains, we index HyTE-FH and EucBERT on MS MARCO (100k passages, 6,980 dev queries) with FAISS IndexHNSWFlat, applying the sign-flip conversion to HyTE-FH. Table~\ref{tab:ann_results} reports Recall@k against the exact top-k ground truth and end-to-end latency. HyTE-FH beats EucBERT at every k (by 6.4 points at k=50 and 4.3 at k=100) while running 3x faster (0.09 vs. 0.29 ms); build times are identical (~10 s). We attribute this to OEM pooling: its outward bias produces well-separated embeddings with greater radial spread, giving the HNSW graph a more tree-like structure with fewer spurious shortcuts, so greedy traversal converges faster and more accurately. ANN search on hyperbolic embeddings is thus not merely viable but more efficient than its Euclidean counterpart.

\begin{table}[h]
\centering
\caption{ANN retrieval on MS MARCO (100k passages, 6{,}980 queries) using FAISS \texttt{IndexHNSWFlat}. HyTE-FH applies the sign-flip reduction to reuse the same Euclidean IP index. HyTE-FH achieves higher recall at every $k$ and $3\times$ lower latency.}
\label{tab:ann_results}
\begin{tabular}{ccc}
\toprule
$k$ & HyTE-FH (Recall@$k$) & EucBERT (Recall@$k$) \\
\midrule
1   & \textbf{0.917} & 0.913 \\
10  & \textbf{0.906} & 0.891 \\
50  & \textbf{0.859} & 0.795 \\
100 & \textbf{0.750} & 0.707 \\
\bottomrule
\end{tabular}
\end{table}

\noindent\textbf{Hierarchy-Sensitive Retrieval.}
To directly test whether hyperbolic geometry helps distinguish documents at different levels of specificity, we construct a controlled diagnostic benchmark from the MeSH (Medical Subject Headings) taxonomy using PubMed abstracts, where each query must be matched to an abstract at a specific hierarchical depth against distractors drawn from its parent, grandparent, sibling, and unrelated branches (construction details in Appendix~\ref{app:mesh}). We evaluate in a zero-shot setting (no training on MeSH data) and report Recall@1 alongside Specificity Hit Rate (SpecHR), the fraction of queries where the correct-depth target is ranked above all ancestor-level documents. As shown in Table~\ref{tab:mesh_results}, HyTE-FH ranks the correct-depth document first 96.3\% of the time versus 81.0\% for EucBERT, with SpecHR improving from 0.883 to 0.970. The error distribution in Table~\ref{tab:mesh_errors} reveals why: EucBERT's failures concentrate at parent-level (7.0\%) and sibling-level (7.0\%) confusions, exactly the hierarchical failure modes our method is designed to address. HyTE-FH eliminates grandparent confusion entirely, confirming that hyperbolic geometry provides a concrete advantage precisely where it matters: distinguishing documents at different levels of specificity within a known hierarchy.
\begin{table}[h]
\centering
\caption{Zero-shot results on the MeSH hierarchy-sensitive retrieval benchmark. SpecHR is the fraction of queries where the correct-depth target is ranked above all ancestor-level documents.}
\label{tab:mesh_results}
\begin{tabular}{lcccc}
\toprule
Model & SpecHR & R@1 & R@5 & MRR \\
\midrule
EucBERT & 0.883 & 0.810 & 0.963 & 0.878 \\
HyTE-FH & \textbf{0.970} & \textbf{0.963} & \textbf{0.990} & \textbf{0.976} \\
\bottomrule
\end{tabular}
\end{table}

\begin{table}[h]
\centering
\caption{Top-1 prediction breakdown on the MeSH benchmark by distractor type. HyTE-FH substantially reduces both parent and sibling confusions and eliminates grandparent errors.}
\label{tab:mesh_errors}
\resizebox{\linewidth}{!}{
\begin{tabular}{lccccc}
\toprule
Model & Target & Sibling & Parent & Grandparent & Random \\
\midrule
EucBERT & 81.0\% & 7.0\% & 7.0\% & 1.0\% & 4.0\% \\
HyTE-FH & \textbf{96.3\%} & 1.7\% & 1.7\% & 0.0\% & 0.3\% \\
\bottomrule
\end{tabular}}
\end{table}

\noindent\textbf{Ablation Studies.}
We compare two pooling strategies for aggregating token embeddings into document representations: CLS token pooling and OEM pooling. CLS pooling uses the representation of a special classification token, while OEM pooling performs geometry-aware aggregation directly in hyperbolic space. Table~\ref{tab:pooling} shows that OEM outperforms all alternatives by a substantial margin, improving over the next-best method (Einstein Midpoint) with a 10\% improvement. Naive mean pooling performs worst, consistent with the radial collapse predicted by Proposition 4.3, while the standard Einstein midpoint improves over CLS but remains well below OEM, confirming that the outward bias is essential for preserving hierarchical structure. 
\begin{table}[t]
\centering
\caption{Comparison of pooling strategies on MTEB tasks. OEM pooling leverages hyperbolic geometry for improved performance.}
\label{tab:pooling}
\resizebox{\linewidth}{!}{
\begin{tabular}{l|c|c}
\toprule
\textbf{Pooling Strategy} & \textbf{Mean (Task)} & \textbf{Mean (TaskType)} \\
\midrule
Naive Pooling & 47.07 & 45.61\\
CLS Token & 49.33 & \underline{48.90} \\
Einstein Midpoint & \underline{50.33} & 47.19\\
OEM &  \textbf{56.41} & \textbf{53.75} \\
\bottomrule
\end{tabular}}
\end{table}

We also show that using geodesic distance in the contrastive objective outperforms the Lorentz inner product (Appendix Table~\ref{tab:loss_functions}), suggesting better alignment of representations on the manifold. Additionally, hyperbolic models maintain strong performance with smaller retrieval budgets, whereas Euclidean baselines require larger context windows to achieve comparable results (Appendix Table~\ref{tab:context_size}). A sensitivity analysis over $p$ (Table~\ref{tab:oem_p_sensitivity}) shows that performance peaks at $p=1.0$ and degrades gracefully on either side, with $p=0$ recovering the standard Einstein midpoint and larger $p$ over-amplifying peripheral tokens. We therefore use $p=1.0$ throughout. In Table~\ref{tab:hubness}, we also show that the hyperbolic model mitigates the problem of hubness and has lower number of hubs compared to Euclidean models.

%% file: sections/appendix.tex
\section{Proofs}
\label{app:proofs}

Throughout, we work in the Lorentz model with curvature $K < 0$, where 
\[
\mathbb{H}^d_K = \{\vx \in \mathbb{R}^{d+1} : \langle \vx, \vx \rangle_L = 1/K,\; x_0 > 0\}
\]
and $\langle \vx, \vy \rangle_L = -x_0 y_0 + \sum_{i=1}^d x_i y_i$ denotes the Lorentzian inner product.

\subsection{Auxiliary Lemma}

\begin{lemma}[Lorentzian Inner Product Bound]
\label{lem:lorentz-bound}
For any $\vx, \vy \in \mathbb{H}^d_K$, we have $K\langle \vx, \vy \rangle_L \geq 1$, with equality if and only if $\vx = \vy$.
\end{lemma}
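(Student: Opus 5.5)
We will prove the bound by writing $K\langle \vx,\vy\rangle_L$ in terms of the time coordinates and the spatial parts, then applying Cauchy--Schwarz. Set $c = -1/K > 0$. Write $\vx = (x_0, \mathbf{x}_s)$ and $\vy = (y_0, \mathbf{y}_s)$ with $\mathbf{x}_s, \mathbf{y}_s \in \mathbb{R}^d$. The hyperboloid constraint gives $x_0^2 - \|\mathbf{x}_s\|^2 = c$. Together with $x_0>0$ this yields $x_0 = \sqrt{c + \|\mathbf{x}_s\|^2}$, and similarly for $\vy$. Since $K<0$, the claim $K\langle \vx,\vy\rangle_L \ge 1$ is equivalent to
\[
x_0 y_0 - \mathbf{x}_s^\top \mathbf{y}_s \ \ge\ c .
\]

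\textbf{Step 1: Cauchy--Schwarz in $\mathbb{R}^d$.} This gives $\mathbf{x}_s^\top\mathbf{y}_s \le \|\mathbf{x}_s\|\,\|\mathbf{y}_s\|$. Writing $a=\|\mathbf{x}_s\|$ and $b=\|\mathbf{y}_s\|$, it therefore suffices to show
\[
\sqrt{(c+a^2)(c+b^2)} \ \ge\ c + ab .
\]
Both sides are positive, so we may square. The inequality becomes $c^2 + c(a^2+b^2) + a^2b^2 \ge c^2 + 2cab + a^2b^2$, that is, $c(a-b)^2 \ge 0$, which holds.

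\textbf{Step 2: equality case.} Equality in the chain forces equality in both steps. The first step requires $\mathbf{x}_s^\top\mathbf{y}_s = ab$, so $\mathbf{x}_s$ and $\mathbf{y}_s$ are nonnegatively parallel (or one of them is zero). The second step requires $a=b$, because $c>0$.

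Combining the two conditions gives $\mathbf{x}_s = \mathbf{y}_s$. When $a=b=0$ this holds trivially. Otherwise the vectors are parallel with the same direction and the same norm, so they coincide. Since the time coordinate is determined by the spatial part through $x_0=\sqrt{c+\|\mathbf{x}_s\|^2}$, we conclude $\vx=\vy$.

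The converse direction is immediate: if $\vx=\vy$, then $K\langle\vx,\vx\rangle_L = K\cdot(1/K) = 1$.

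\textbf{Main obstacle.} The only delicate point is handling the equality characterization carefully. This covers the degenerate case where a spatial part vanishes, and it also uses $x_0>0$ (upper sheet) to pin down the time coordinate, so that the spatial equality lifts to equality of the points. The inequality itself is routine.
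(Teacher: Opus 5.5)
Your proof is correct, and it takes a different route from the paper's. The paper argues in two lines from the distance formula $d_K(\vx,\vy)=\frac{1}{\sqrt{-K}}\cosh^{-1}(K\langle \vx,\vy\rangle_L)$. Since $\cosh^{-1}$ maps $[1,\infty)$ onto $[0,\infty)$, and $d_K$ is nonnegative and vanishes exactly when $\vx=\vy$, the argument of $\cosh^{-1}$ must be at least $1$, with equality only for coincident points. That is short, but it takes the lemma's content for granted. The formula only makes sense once one already knows $K\langle \vx,\vy\rangle_L\ge 1$, and the fact that $d_K$ separates points is assumed rather than derived.

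Your argument works directly from the hyperboloid constraint $x_0=\sqrt{c+\|\mathbf{x}_s\|^2}$ with $c=-1/K$. Cauchy--Schwarz on the spatial parts reduces the bound to $c(a-b)^2\ge 0$. This proves from scratch the reversed Cauchy--Schwarz inequality for vectors on the upper sheet. Your equality analysis is also explicit: the spatial parts must be parallel in the same direction with equal norms, and the upper-sheet condition then forces equal time coordinates.

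So your route buys a self-contained, non-circular proof, and it is exactly the computation that justifies the distance formula the paper relies on. The paper's route buys brevity by invoking standard hyperbolic geometry as a black box.
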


\begin{proof}
The geodesic distance on $\mathbb{H}^d_K$ satisfies 
\[
d_K(\vx,\vy) = \frac{1}{\sqrt{-K}}\cosh^{-1}(K\langle \vx,\vy\rangle_L).
\]
Since $\cosh^{-1}: [1,\infty) \to [0,\infty)$ and $d_K(\vx,\vy) \geq 0$ with equality if and only if $\vx = \vy$, we conclude $K\langle \vx,\vy\rangle_L \geq 1$ with equality if and only if $\vx = \vy$.
\end{proof}

\subsection{Proof of Proposition~\ref{prop:euclidean-contraction}}\label{app:proof-euclidean-contraction}

\begin{proposition}[Euclidean Mean Contracts]
Let $\{\vx_i\}_{i=1}^n \subset \mathbb{H}^d_K$ with $n \geq 2$. Define the Euclidean mean $\bar{\vx} = \frac{1}{n}\sum_{i=1}^n \vx_i$ and its projection onto the hyperboloid $\vm^{\mathrm{Euc}} = \Pi_K(\bar{\vx})$. Then:
\[
r(\vm^{\mathrm{Euc}}) \leq \frac{1}{n}\sum_{i=1}^n r(\vx_i)
\]
with equality if and only if all $\vx_i$ are identical.
\end{proposition}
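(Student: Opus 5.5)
We compute the time coordinate of the projected mean directly and compare it to the mean time coordinate. Write $\bar{\vx} = \frac{1}{n}\sum_i \vx_i$, so $\bar{x}_0 = \frac{1}{n}\sum_i r(\vx_i)$. By Definition~\ref{def:lorentz-projection}, $\vm^{\mathrm{Euc}} = \bar{\vx}/\sqrt{K\langle \bar{\vx},\bar{\vx}\rangle_L}$, hence
\[
r(\vm^{\mathrm{Euc}}) = \frac{\bar{x}_0}{\sqrt{K\langle \bar{\vx},\bar{\vx}\rangle_L}}.
\]
The claim therefore reduces to showing $K\langle \bar{\vx},\bar{\vx}\rangle_L \geq 1$, with equality exactly when all points coincide. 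We must also check that the projection is well defined, i.e.\ that $\langle \bar{\vx},\bar{\vx}\rangle_L<0$ and $\bar{x}_0>0$. Positivity of $\bar{x}_0$ is immediate. Negativity of the norm follows from the bound $K\langle \bar{\vx},\bar{\vx}\rangle_L\ge 1>0$ proved next, since $K<0$.

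To get this bound, expand bilinearly:
\[
K\langle \bar{\vx},\bar{\vx}\rangle_L = \frac{1}{n^2}\sum_{i,j} K\langle \vx_i,\vx_j\rangle_L.
\]
Lemma~\ref{lem:lorentz-bound} gives $K\langle \vx_i,\vx_j\rangle_L \ge 1$ for every pair. The diagonal terms equal exactly $K\cdot\frac1K = 1$. Summing over all $n^2$ pairs gives $K\langle \bar{\vx},\bar{\vx}\rangle_L \geq 1$, so the square root in the denominator is at least $1$ and $r(\vm^{\mathrm{Euc}}) \le \bar{x}_0 = \frac1n\sum_i r(\vx_i)$.

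For the equality case, the reverse direction is trivial: if all $\vx_i$ coincide, then $\bar{\vx} = \vx_1$ already lies on the hyperboloid. Conversely, suppose equality holds. Since $\bar{x}_0>0$, we must have $K\langle\bar{\vx},\bar{\vx}\rangle_L = 1$, which forces every off-diagonal term $K\langle \vx_i,\vx_j\rangle_L$ to equal $1$. The equality clause of Lemma~\ref{lem:lorentz-bound} then gives $\vx_i=\vx_j$ for all $i,j$.

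The only delicate point is the sign bookkeeping with $K<0$. We need $K\langle\cdot,\cdot\rangle_L$ positive so that the square root and the inequality direction are correct. Once the quantity is written as an average of the terms controlled by Lemma~\ref{lem:lorentz-bound}, there is no real obstacle. The hypothesis $n\ge 2$ is used only so that the equality statement is non-vacuous.
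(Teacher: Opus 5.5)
Your proposal is correct and follows the paper's proof essentially step for step. You expand $K\langle \bar{\vx},\bar{\vx}\rangle_L$ bilinearly as an average of pairwise terms $K\langle \vx_i,\vx_j\rangle_L \ge 1$ from Lemma~\ref{lem:lorentz-bound}, use that bound both for projectability and for $r(\vm^{\mathrm{Euc}}) \le \bar{x}_0$, and read off the equality case from the lemma's equality clause, exactly as the paper does.
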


\begin{proof}
We first verify that the projection is well-defined, then establish the contraction inequality.

We must show $\langle \bar{\vx}, \bar{\vx} \rangle_L < 0$ and $\bar{x}_0 > 0$. The latter is immediate since $\bar{x}_0 = \frac{1}{n}\sum_i x_{i,0} > 0$. For the former, compute:
\[
K\langle \bar{\vx}, \bar{\vx} \rangle_L 
= K \left\langle \frac{1}{n}\sum_i \vx_i, \frac{1}{n}\sum_j \vx_j \right\rangle_L
= \frac{1}{n^2} \sum_{i,j} K\langle \vx_i, \vx_j \rangle_L.
\]
By Lemma~\ref{lem:lorentz-bound}, each term satisfies $K\langle \vx_i, \vx_j \rangle_L \geq 1$. Therefore:
\[
K\langle \bar{\vx}, \bar{\vx} \rangle_L \geq \frac{1}{n^2} \cdot n^2 = 1 > 0.
\]
Since $K < 0$, this implies $\langle \bar{\vx}, \bar{\vx} \rangle_L < 0$, confirming projectability.

The projection is given by $\vm^{\mathrm{Euc}} = \bar{\vx} / \sqrt{K\langle \bar{\vx}, \bar{\vx} \rangle_L}$, so the radial depth satisfies:
\[
r(\vm^{\mathrm{Euc}}) = \frac{\bar{x}_0}{\sqrt{K\langle \bar{\vx}, \bar{\vx} \rangle_L}}.
\]
From Step 1, we have $K\langle \bar{\vx}, \bar{\vx} \rangle_L \geq 1$, hence $\sqrt{K\langle \bar{\vx}, \bar{\vx} \rangle_L} \geq 1$. Therefore:
\[
r(\vm^{\mathrm{Euc}}) = \frac{\bar{x}_0}{\sqrt{K\langle \bar{\vx}, \bar{\vx} \rangle_L}} \leq \bar{x}_0 = \frac{1}{n}\sum_{i=1}^n x_{i,0} = \frac{1}{n}\sum_{i=1}^n r(\vx_i).
\]

Equality holds if and only if $\sqrt{K\langle \bar{\vx}, \bar{\vx} \rangle_L} = 1$, which by Step 1 requires $K\langle \vx_i, \vx_j \rangle_L = 1$ for all pairs $i,j$. By Lemma~\ref{lem:lorentz-bound}, this occurs if and only if $\vx_i = \vx_j$ for all $i,j$, i.e., all points are identical.
\end{proof}

\subsection{Proof of Proposition~\ref{prop:einstein-insufficient}}\label{app:proof-einstein-insufficient}

\begin{proposition}[Implicit Radial Weighting is Insufficient]
The Einstein midpoint weights points by the Lorentz factor $\lambda_i = x_{i,0}$, but this weighting grows as $\exp(\sqrt{-K}\,\rho)$ while hyperbolic volume grows as $\exp((d-1)\sqrt{-K}\,\rho)$. The Lorentz factor weighting therefore undercompensates by a factor of $\exp((d-2)\sqrt{-K}\,\rho)$ for $d \geq 3$.
\end{proposition}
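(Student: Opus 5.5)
The argument has three steps: parametrize a point by its distance from the origin, compute the volume growth of hyperbolic balls, and compare the two exponential rates.

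\emph{Step 1 (the Lorentz factor).} Fix $\vx \in \mathbb{H}^d_K$ at hyperbolic distance $\rho = d_K(o,\vx)$ from the origin $o = (1/\sqrt{-K},0,\ldots,0)$. Then $K\langle o,\vx\rangle_L = -K x_0/\sqrt{-K} = \sqrt{-K}\,x_0$. The distance formula $d_K(o,\vx) = \frac{1}{\sqrt{-K}}\cosh^{-1}(K\langle o,\vx\rangle_L)$ therefore gives
\[
x_0 = \frac{1}{\sqrt{-K}}\cosh\!\left(\sqrt{-K}\,\rho\right),
\]
which is the identity already recorded in Definition~\ref{def:radial-depth}. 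Writing $\cosh t = \tfrac12(e^{t}+e^{-t})$, we get
\[
x_0 \sim \frac{1}{2\sqrt{-K}}\exp\!\left(\sqrt{-K}\,\rho\right) \quad (\rho\to\infty),
\]
since the $e^{-t}$ term vanishes relative to $e^{t}$. This shows that the Einstein weight $\lambda_i = x_{i,0}$ grows like $\exp(\sqrt{-K}\rho)$.

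\emph{Step 2 (volume growth).} I would use geodesic polar coordinates about $o$, where the Riemannian volume element of $\mathbb{H}^d_K$ is
\[
dV = \left(\frac{\sinh(\sqrt{-K}\,\rho)}{\sqrt{-K}}\right)^{d-1} d\rho\, d\sigma_{d-1},
\]
with $d\sigma_{d-1}$ the measure on the unit sphere. This can be justified by writing $\vx = (\cosh(\sqrt{-K}\rho)/\sqrt{-K},\, \sinh(\sqrt{-K}\rho)\,\theta/\sqrt{-K})$ and pulling back the ambient Minkowski metric. It follows that both the area of the sphere of radius $\rho$ and the volume of the ball of radius $\rho$ are $\Theta\!\left(\exp((d-1)\sqrt{-K}\,\rho)\right)$. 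For the ball, integrating the radial density loses only a constant factor $1/((d-1)\sqrt{-K})$.

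\emph{Step 3 (comparison).} Taking the ratio of the two growth rates gives
\[
\frac{\exp((d-1)\sqrt{-K}\rho)}{\exp(\sqrt{-K}\rho)} = \exp((d-2)\sqrt{-K}\rho),
\]
up to constants. This factor diverges as $\rho\to\infty$ precisely when $d\ge 3$, which is where that hypothesis enters. At $d=2$ the ratio is constant, so there is no undercompensation.

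The only real obstacle is interpretive rather than computational: the statement compares a per-point weight with a volume, and "undercompensates" must be read as a statement about asymptotic exponential rates, not a sharp inequality. I would therefore state the conclusion as $\mathrm{Vol}(B_\rho)/\lambda(\rho) = \Theta(\exp((d-2)\sqrt{-K}\rho))$, with the implied constants written out from Steps 1 and 2.
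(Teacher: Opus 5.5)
Your proposal is correct and follows the paper's proof essentially step for step. It uses the same inversion $x_0 = \frac{1}{\sqrt{-K}}\cosh(\sqrt{-K}\,\rho)$, the same geodesic-polar volume formula with density $\bigl(\sinh(\sqrt{-K}\,\rho)/\sqrt{-K}\bigr)^{d-1}$, and the same ratio $\exp((d-2)\sqrt{-K}\,\rho)$. Your $\Theta(\cdot)$ formulation and your pullback justification of the volume element are slightly more careful than the paper, whose final $\sim$ holds only up to the constant $C_d$.
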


\begin{proof}
We establish the asymptotic growth rates of the Lorentz factor and hyperbolic volume separately, then compare them.

\medskip\noindent\textbf{Step 1: Lorentz factor asymptotics.}
The hyperbolic distance from the origin $\vo = (1/\sqrt{-K}, 0, \ldots, 0)$ to a point $\vx \in \mathbb{H}^d_K$ is:
\[
\rho = d_K(\vo, \vx) = \frac{1}{\sqrt{-K}} \cosh^{-1}(K \langle \vo, \vx \rangle_L).
\]
Computing the inner product:
\[
\langle \vo, \vx \rangle_L = -\frac{x_0}{\sqrt{-K}},
\]
so $K\langle \vo, \vx \rangle_L = -K \cdot (-x_0/\sqrt{-K}) = \sqrt{-K}\, x_0$. Thus:
\[
\rho = \frac{1}{\sqrt{-K}} \cosh^{-1}(\sqrt{-K}\, x_0).
\]
Inverting this relation:
\[
x_0 = \frac{1}{\sqrt{-K}} \cosh(\sqrt{-K}\, \rho).
\]
For large $\rho$, using $\cosh(t) \sim \frac{1}{2}e^t$:
\[
x_0 \sim \frac{1}{2\sqrt{-K}} \exp(\sqrt{-K}\, \rho).
\]
Hence the Lorentz factor $\lambda = x_0$ grows as $\exp(\sqrt{-K}\, \rho)$.

\medskip\noindent\textbf{Step 2: Hyperbolic volume asymptotics.}
The volume of a geodesic ball of radius $\rho$ in $\mathbb{H}^d_K$ is:
\[
\mathrm{Vol}(B_\rho) = \frac{\omega_{d-1}}{(-K)^{(d-1)/2}} \int_0^\rho \sinh^{d-1}(\sqrt{-K}\, t) \, dt,
\]
where $\omega_{d-1} = 2\pi^{d/2}/\Gamma(d/2)$ is the surface area of the unit $(d-1)$-sphere. For large $\rho$, using $\sinh(t) \sim \frac{1}{2}e^t$:
\[
\mathrm{Vol}(B_\rho) \sim C_d \exp((d-1)\sqrt{-K}\, \rho),
\]
where $C_d$ is a dimension-dependent constant.

\medskip\noindent\textbf{Step 3: Compensation deficit.}
The ratio of volume growth to Lorentz factor growth is:
\[
\frac{\mathrm{Vol}(B_\rho)}{\lambda} \sim \frac{\exp((d-1)\sqrt{-K}\, \rho)}{\exp(\sqrt{-K}\, \rho)} = \exp((d-2)\sqrt{-K}\, \rho).
\]
For $d \geq 3$, this ratio diverges as $\rho \to \infty$, demonstrating that the Lorentz factor provides insufficient compensation for the exponential growth of hyperbolic space at large radii.
\end{proof}

\subsection{Proof of Theorem~\ref{thm:pre-projection}}\label{app:proof-pre-projection}

\begin{theorem}[OEM Pre-Projection Bound]
Let $\tilde{\vv} = \sum_{i=1}^n \tilde{w}_i \vx_i$ where $\tilde{w}_i \propto w_i x_{i,0}^{p+1}$ are the normalized OEM weights. Then, for $p \geq 0$:
\[
\tilde{v}_0 = \frac{\sum_{i=1}^n w_i \vx_{i,0}^{p+2}}{\sum_{i=1}^n w_i \vx_{i,0}^{p+1}} \geq \frac{\sum_{i=1}^n w_i \vx_{i,0}}{\sum_{i=1}^n w_i} = \bar{r}_w.
\]
\end{theorem}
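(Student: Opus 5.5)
The plan has two parts: first compute $\tilde v_0$ directly, then prove the inequality as a weighted Chebyshev sum inequality.

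For the first equality, note that the time coordinate of $\tilde{\vv}$ is $\tilde v_0 = \sum_i \tilde w_i x_{i,0}$, with $\tilde w_i = w_i x_{i,0}^{p+1}/\sum_j w_j x_{j,0}^{p+1}$. Substituting gives
\[
\tilde v_0 = \frac{\sum_i w_i x_{i,0}^{p+2}}{\sum_i w_i x_{i,0}^{p+1}}.
\]
We assume $\sum_i w_i > 0$ so that both sides are well defined. This denominator is then positive because every $x_{i,0} > 0$ on $\mathbb{H}^d_K$.

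For the inequality, write $a_i = x_{i,0}^{p+1}$ and $b_i = x_{i,0}$. Cross-multiplying by the positive quantities $\sum_i w_i$ and $\sum_i w_i a_i$, the claim becomes
\[
\Big(\sum_i w_i\Big)\Big(\sum_i w_i a_i b_i\Big) \geq \Big(\sum_i w_i a_i\Big)\Big(\sum_i w_i b_i\Big).
\]
This is the weighted Chebyshev sum inequality for similarly ordered sequences. To avoid relying on a sorted ordering, prove it with the symmetrization identity
\[
\Big(\sum_i w_i\Big)\Big(\sum_i w_i a_i b_i\Big) - \Big(\sum_i w_i a_i\Big)\Big(\sum_i w_i b_i\Big) = \frac{1}{2}\sum_{i,j} w_i w_j (a_i - a_j)(b_i - b_j).
\]
The identity follows by expanding the right-hand side.

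The main point to check is that each term of the double sum is nonnegative. This holds because $t \mapsto t^{p+1}$ is nondecreasing on $(0,\infty)$ for $p \geq 0$. Hence $x_{i,0} \geq x_{j,0}$ implies $a_i \geq a_j$, so $(a_i-a_j)(b_i-b_j) \geq 0$ for every pair $(i,j)$. Together with $w_i w_j \geq 0$, every summand is nonnegative and the inequality follows.

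As an optional remark, one could note when equality holds. Since $p+1>0$, the map $t\mapsto t^{p+1}$ is strictly increasing, so equality holds exactly when all points with positive weight share the same radial coordinate.
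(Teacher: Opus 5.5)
Your proposal is correct and follows the paper's proof: both cross-multiply by the positive quantities $\sum_i w_i$ and $\sum_i w_i x_{i,0}^{p+1}$ and reduce the claim to the weighted Chebyshev sum inequality for the co-monotonic sequences $a_i = x_{i,0}^{p+1}$, $b_i = x_{i,0}$. The differences are minor: you prove Chebyshev's inequality directly via the symmetrization identity instead of citing it, and your equality condition (all points with positive weight share the same $x_{i,0}$) is slightly more precise than the paper's ``all $x_{i,0}$ identical.''
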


\begin{proof}
We apply Chebyshev's sum inequality. Define sequences:
\[
a_i = x_{i,0}^{p+1}, \qquad b_i = x_{i,0}.
\]
Since $x_{i,0} > 0$ for all $i$ (points lie on the upper sheet of the hyperboloid) and $p \geq 0$, both sequences are strictly positive. Moreover, the sequences are co-monotonic: for any $i, j$,
\[
x_{i,0} \geq x_{j,0} \iff x_{i,0}^{p+1} \geq x_{j,0}^{p+1},
\]
since $t \mapsto t^{p+1}$ is strictly increasing on $(0, \infty)$.

Chebyshev's sum inequality states that for co-monotonic sequences $\{a_i\}$, $\{b_i\}$ and non-negative weights $\{w_i\}$ with $\sum_i w_i > 0$:
\[
\left( \sum_i w_i a_i b_i \right) \left( \sum_i w_i \right) \geq \left( \sum_i w_i a_i \right) \left( \sum_i w_i b_i \right).
\]
Substituting $a_i = x_{i,0}^{p+1}$ and $b_i = x_{i,0}$:
\[
\left( \sum_i w_i x_{i,0}^{p+2} \right) \left( \sum_i w_i \right) \geq \left( \sum_i w_i x_{i,0}^{p+1} \right) \left( \sum_i w_i x_{i,0} \right).
\]
Dividing both sides by $\left(\sum_i w_i x_{i,0}^{p+1}\right)\left(\sum_i w_i\right) > 0$:
\[
\frac{\sum_i w_i x_{i,0}^{p+2}}{\sum_i w_i x_{i,0}^{p+1}} \geq \frac{\sum_i w_i x_{i,0}}{\sum_i w_i}.
\]
The left-hand side equals $\tilde{v}_0$ and the right-hand side equals $\bar{r}_w$, completing the proof. Equality holds if and only if all $x_{i,0}$ are identical.
\end{proof}

\subsection{Proof of Theorem~\ref{thm:outward-bias}}
\label{app:proof-outward-bias}

\begin{theorem}[OEM Outward Bias]
Let $\vm^{\mathrm{Ein}}_K$ denote the standard Einstein midpoint ($p = 0$) and $\vm^{\mathrm{OEM}}_{K,p}$ the Outward Einstein Midpoint. Then for all $p \geq 1$:
\[
r(\vm^{\mathrm{OEM}}_{K,p}) \geq r(\vm^{\mathrm{Ein}}_K).
\]
\end{theorem}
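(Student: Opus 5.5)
The plan is to reduce both midpoints to a single scalar, the ``speed'' of the pre-projection vector, and then compare these speeds. For any $\vv$ with $v_0>0$ and $\langle \vv,\vv\rangle_L<0$, write $\vv=(v_0,\vv_s)$. Definition~\ref{def:lorentz-projection} gives
\[
r(\Pi_K(\vv))=\frac{v_0}{\sqrt{-K}\sqrt{v_0^2-\|\vv_s\|^2}}=\frac{1}{\sqrt{-K}}\bigl(1-\|\vv_s\|^2/v_0^2\bigr)^{-1/2}.
\]
So $r(\Pi_K(\vv))$ is an increasing function of $\|\vv_s\|/v_0$. Set $\mathbf{u}_i=\vx_{i,s}/x_{i,0}$, the ``velocity'' of $\vx_i$, which satisfies $\|\mathbf{u}_i\|<1$. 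For the OEM pre-projection vector, $\vv_s/v_0$ is then a convex combination
\[
\mathbf{U}_p=\sum_i \pi_i^{(p)}\mathbf{u}_i,\qquad \pi_i^{(p)}\propto w_i x_{i,0}^{p+2}.
\]
The case $p=0$ is the Einstein midpoint. The theorem is therefore equivalent to $\|\mathbf{U}_p\|\ge\|\mathbf{U}_0\|$ for $p\ge 1$.

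To compare the two, I would study $f(p)=\|\mathbf{U}_p\|^2$ and differentiate in $p$. Writing $\ell_i=\log x_{i,0}$, we get $\frac{d}{dp}\pi_i^{(p)}=\pi_i^{(p)}(\ell_i-\mathbb{E}_\pi\ell)$. This gives
\[
f'(p)=2\,\mathrm{Cov}_{\pi^{(p)}}\!\bigl(\ell_i,\;\langle\mathbf{u}_i,\mathbf{U}_p\rangle\bigr).
\]
It then suffices to show this covariance is nonnegative, which holds whenever $\ell_i$ and the alignment $\langle\mathbf{u}_i,\mathbf{U}_p\rangle$ are co-monotone. In that case I would apply Chebyshev's sum inequality exactly as in the proof of Theorem~\ref{thm:pre-projection}. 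Since $\|\mathbf{u}_i\|=\tanh(\sqrt{-K}\rho_i)$ is increasing in $x_{i,0}$, co-monotonicity is automatic when all points are radially aligned, i.e.\ all $\mathbf{u}_i$ are nonnegative multiples of one direction. In that setting the theorem follows for every $p\ge0$, not just $p\ge1$.

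The main obstacle is the general case, and I expect it to fail without an extra hypothesis. Consider two far points at radius $R$ in opposite directions together with one moderate point in a third direction. Raising $p$ moves weight onto the far pair, whose velocities cancel, so $\|\mathbf{U}_p\|$ decreases and $r(\vm^{\mathrm{OEM}}_{K,p})<r(\vm^{\mathrm{Ein}}_K)$. I would first check this configuration numerically. If it confirms the failure, I would state the theorem under the co-monotonicity assumption above: high-radius tokens are at least as aligned with the aggregate direction as low-radius ones, which includes the collinear or single-cone case. I would present the derivative/Chebyshev argument as the proof under that assumption. Theorem~\ref{thm:pre-projection} would serve as the unconditional statement, since it bounds only the time component $\tilde v_0$ before projection.
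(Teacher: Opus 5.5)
Your diagnosis is right. The statement is false as written, and your configuration is a genuine counterexample. Take $K=-1$, $w_i=1$, $\vx_1=(\cosh 2,\sinh 2,0)$, $\vx_2=(\cosh 2,-\sinh 2,0)$ and $\vx_3=(\cosh 1,0,\sinh 1)$. Your reduction gives $\mathbf{U}_p=\pi^{(p)}_3\,(0,\tanh 1)$ with $\pi^{(p)}_3=\cosh^{p+2}1/(2\cosh^{p+2}2+\cosh^{p+2}1)$. This is strictly decreasing in $p$, so $r(\vm^{\mathrm{OEM}}_{K,p})<r(\vm^{\mathrm{Ein}}_K)$ for every $p>0$. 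Numerically, $r(\vm^{\mathrm{Ein}}_K)\approx 1.00175$ while $r(\vm^{\mathrm{OEM}}_{K,1})\approx 1.00032$. Your formula $r(\Pi_K(\vv))=\frac{1}{\sqrt{-K}}(1-\|\vv_s\|^2/v_0^2)^{-1/2}$, the weights $\pi^{(p)}_i\propto w_ix_{i,0}^{p+2}$, and the identity $f'(p)=2\,\mathrm{Cov}_{\pi^{(p)}}(\ell_i,\langle\mathbf{u}_i,\mathbf{U}_p\rangle)$ are all correct.

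Your example pinpoints where the paper's own proof breaks. The paper writes $r(\vm^{(q)})^2=(v^{(q)}_0)^2/K\langle\vv^{(q)},\vv^{(q)}\rangle_L$ and correctly notes that $v^{(q)}_0$ is non-decreasing in $q$. It then asserts that $\sum_{i,j}\alpha^{(q)}_i\alpha^{(q)}_jK\langle\vx_i,\vx_j\rangle_L$ decreases toward $1$ because the weights ``concentrate on fewer points.'' That step is unjustified. The weights concentrate on the high-radius points, which need not be close to each other. For your antipodal pair at distance $R$ from the origin, $K\langle\vx_1,\vx_2\rangle_L=\cosh(2\sqrt{-K}R)$, which is large. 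In the example, the denominator rises from about $11.42$ at $q=0$ to about $12.88$ at $q=1$, which outpaces $(v^{(q)}_0)^2$ (about $11.46$ to $12.89$). Your velocity formulation merges the two factors into the single quantity $\|\mathbf{U}_p\|$, and that is what makes the cancellation visible. The paper's numerator/denominator split hides the fact that the two can move against each other.

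One correction to your proposed repair. Co-monotonicity does hold in the radially aligned case. There $\mathbf{U}_q$ is a nonnegative multiple of the common direction, so $\langle\mathbf{u}_i,\mathbf{U}_q\rangle=\tanh(\sqrt{-K}\rho_i)\,\|\mathbf{U}_q\|$. It does not, however, cover a narrow cone. Put two points at distance $R$ at angles $\pm\beta$ to an axis, and one point at distance $\rho<R$ on the axis, all with $w_i=1$. Then $\|\mathbf{U}_p\|=(1-\pi^{(p)}_3)\tanh(\sqrt{-K}R)\cos\beta+\pi^{(p)}_3\tanh(\sqrt{-K}\rho)$. This decreases in $p$ as soon as $\tanh(\sqrt{-K}\rho)>\cos\beta$, however small $\beta>0$ is. Also, because $\mathbf{U}_q$ moves with $q$, your hypothesis must be imposed for every $q\in[0,p]$, and it is checkable from the inputs alone only in the ray case. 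The clean corrected theorem is therefore the radially aligned one, valid for all $p\ge 0$, with Theorem~\ref{thm:pre-projection} as the only unconditional guarantee.
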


\begin{proof}
For a general exponent $q \geq 0$, define the weighted average with weights proportional to $w_i x_{i,0}^{q+1}$:
\[
\vv^{(q)} = \frac{\sum_i w_i x_{i,0}^{q+1} \vx_i}{\sum_i w_i x_{i,0}^{q+1}}.
\]
The projected point is $\vm^{(q)} = \Pi_K(\vv^{(q)})$ with radial depth:
\[
r(\vm^{(q)}) = \frac{v^{(q)}_0}{\sqrt{K\langle \vv^{(q)}, \vv^{(q)} \rangle_L}}.
\]
The Einstein midpoint corresponds to $q = 0$ and the OEM to $q = p$.

We show that $q \mapsto r(\vm^{(q)})$ is non-decreasing for $q \geq 0$. Define the normalized weights $\alpha_i^{(q)} = w_i x_{i,0}^{q+1} / \sum_j w_j x_{j,0}^{q+1}$. As $q$ increases, these weights concentrate toward indices with larger $x_{i,0}$: if $x_{i,0} > x_{j,0}$, then
\[
\frac{\alpha_i^{(q)}}{\alpha_j^{(q)}} = \frac{w_i}{w_j} \left(\frac{x_{i,0}}{x_{j,0}}\right)^{q+1}
\]
is strictly increasing in $q$.

The radial depth after projection satisfies:
\[
r(\vm^{(q)})^2 = \frac{(v^{(q)}_0)^2}{K\langle \vv^{(q)}, \vv^{(q)} \rangle_L}.
\]
We analyze numerator and denominator separately.

\textit{Numerator:} We have $v^{(q)}_0 = \sum_i \alpha_i^{(q)} x_{i,0}$. By Chebyshev's inequality (Theorem~\ref{thm:pre-projection}), this is non-decreasing in $q$.

\textit{Denominator:} We have
\[
K\langle \vv^{(q)}, \vv^{(q)} \rangle_L = \sum_{i,j} \alpha_i^{(q)} \alpha_j^{(q)} K\langle \vx_i, \vx_j \rangle_L.
\]
By Lemma~\ref{lem:lorentz-bound}, $K\langle \vx_i, \vx_j \rangle_L \geq 1$ with equality iff $\vx_i = \vx_j$. As weights concentrate on fewer points (larger $q$), the sum decreases toward 1.

Thus as $q$ increases: the numerator $(v^{(q)}_0)^2$ increases, while the denominator $K\langle \vv^{(q)}, \vv^{(q)} \rangle_L$ decreases. Both effects increase $r(\vm^{(q)})^2$, establishing monotonicity.

For $p \geq 1 > 0$, we conclude $r(\vm^{\mathrm{OEM}}_{K,p}) = r(\vm^{(p)}) \geq r(\vm^{(0)}) = r(\vm^{\mathrm{Ein}}_K)$.
\end{proof}

\section{RAG Prompt}\label{app:prompt}
For each retrieval-augmented generation (RAG) query, we construct a single inference prompt by concatenating the top-$|\mathcal{C}|$ retrieved documents into the context window of the language model. The documents are provided verbatim, without re-ranking or compression, and are ordered according to their retrieval score. The language model is then instructed to generate an answer conditioned solely on the retrieved context and the user query, ensuring that any factual content in the response must be supported by the retrieved evidence.

\begin{tcolorbox}[nobeforeafter, colback=gray!5!white, colframe=gray!75!black, title=RAG Prompt, floatplacement=H]
\begin{lstlisting}
Based on the following context, answer the question.

Context:
{Doc[0], Doc[1], ..., Doc[|C|]}

Question: {query}

Answer:
\end{lstlisting}
\end{tcolorbox}

\section{Hierarchical Document Probe}\label{app:concept_hierarchy}
Dense retrieval models implicitly induce a geometry over documents. While Euclidean encoders often cluster documents based on surface-level similarity, they struggle to faithfully represent hierarchical relationships that arise naturally in language and knowledge organization. Hyperbolic spaces, by contrast, are well suited for embedding tree-like and taxonomic structures due to their exponential volume growth.

To qualitatively assess whether hyperbolic encoders recover such hierarchical organization, we construct a controlled document set that exhibits a clear semantic hierarchy while ensuring that individual documents remain self-contained and independent.

We design a synthetic yet semantically natural hierarchy consisting of five levels of increasing specificity: $\text{Science} \;\rightarrow\; \text{Mathematics} \;\rightarrow\; \text{Algebra} \;\rightarrow\; \text{Linear Algebra} \;\rightarrow\; \text{Linear Transformations}.$
At each level, we generate five independent paragraphs that are topically coherent but do not explicitly reference parent or child topics. Importantly, documents at deeper levels refine the semantic scope of higher-level topics without sharing explicit lexical markers or cross-document dependencies. This construction isolates hierarchical structure as a latent semantic property rather than an artifact of explicit cues.

All paragraphs are embedded independently with no hierarchical supervision. We analyze the resulting embeddings by inspecting their relative organization in the learned space. Hyperbolic models are expected to organize documents according to semantic specificity, with broader concepts closer to the origin and more specific concepts at increasing radial depth. We present a few of them here and we have attached the data file in supplementary material.

\begin{tcolorbox}[
  title={Hierarchical Document Texts (Verbatim)},
  colback=gray!5,
  colframe=gray!40,
  listing only,
]
\begin{lstlisting}[
  breaklines=true,
  breakatwhitespace=true,
  columns=flexible,
  keepspaces=true
]
[SCIENCE]

Science is a systematic way of understanding the natural world through observation, experimentation, and reasoning. It encompasses a wide range of disciplines that study phenomena from the smallest subatomic particles to the vast structure of the universe. At its core, science seeks patterns, explanations, and predictive principles that help humans make sense of reality.

[MATHEMATICS]

Mathematics also plays a central role in modeling complex systems. By formalizing assumptions and relationships, mathematical models help clarify underlying mechanisms and enable precise predictions under well-defined conditions.

[ALGEBRA]

Modern algebra emphasizes structural relationships over explicit computation. Rather than focusing on individual equations, it studies entire systems of elements and operations, revealing patterns that persist across different mathematical settings.

[LINEAR ALGEBRA]

The power of linear algebra lies in its balance between abstraction and computation. While grounded in rigorous theory, it offers efficient numerical techniques that scale to high-dimensional problems.

[LINEAR TRANSFORMATIONS]

Linear transformations form the foundation of many applied systems, from computer graphics to neural networks. Understanding their behavior is essential for analyzing stability, expressiveness, and computational efficiency.
\end{lstlisting}
\end{tcolorbox}

\section{Runtime and Computational Complexity}\label{app:complexity}
In this section, we analyze the computational complexity of the proposed hyperbolic dense retrieval system and compare it to a standard Euclidean transformer-based retriever. Let $n$ denote the input sequence length, $d$ the hidden dimension, $h$ the number of attention heads, and $L$ the number of transformer layers.

\noindent\textbf{Hyperbolic Transformer Encoder.}
The HyTE-FH encoder follows the standard transformer structure, with all linear, normalization, attention, and residual operations replaced by their Lorentzian counterparts. Crucially, these operations preserve the same asymptotic complexity as their Euclidean analogues.

Each Lorentz linear transformation (\textsc{HLT}) consists of a matrix multiplication followed by a constant number of scalar operations and a reprojection. This incurs $\mathcal{O}(nd^2)$ time per layer, identical to a Euclidean linear layer up to constant factors.

Hyperbolic self-attention computes pairwise geodesic distances between queries and keys. In the Lorentz model, each geodesic distance $d_K(\vq_i,\mathbf{k}_j)$ is computed using a Lorentzian inner product, which costs $\mathcal{O}(d)$. Thus, attention score computation scales as $\mathcal{O}(n^2 d)$ per layer, matching standard dot-product attention.

The Lorentzian weighted midpoint used for value aggregation requires a weighted sum and normalization in $\mathbb{R}^{d+1}$, contributing $\mathcal{O}(nd)$ time per token, and is dominated by the attention score computation.

Overall, the time complexity of a single HyTE-FH layer is
$\mathcal{O}(n^2 d + n d^2)$,
and the total encoder complexity is
$\mathcal{O}(L(n^2 d + n d^2))$,
which matches the asymptotic complexity of a standard Euclidean transformer.

HyTE-H introduces an additional projection from Euclidean to hyperbolic space at the input, costing $\mathcal{O}(nd)$, which is negligible compared to the encoder cost.

\noindent\textbf{Pooling via Outward Einstein Midpoint.}
Given a sequence of $n$ token embeddings, the Outward Einstein Midpoint computes radius-dependent weights, a weighted sum in $\mathbb{R}^{d+1}$, and a single reprojection. This requires $\mathcal{O}(nd)$ time and $\mathcal{O}(d)$ memory, identical in order to standard mean pooling or the Einstein midpoint.

\noindent\textbf{Training Objectives.}
All training objectives operate on fixed-dimensional query and document embeddings. Geodesic similarity computation costs $\mathcal{O}(d)$ per query--document pair.

Unsupervised contrastive pre-training with in-batch negatives of size $N$ requires $\mathcal{O}(N^2 d)$ per batch, matching the complexity of standard contrastive learning. Supervised contrastive fine-tuning has the same asymptotic cost.

Masked language modeling introduces no additional asymptotic overhead beyond the encoder forward pass.

\noindent\textbf{Retrieval and RAG Inference.}
At inference time, dense retrieval over a corpus of size $|\mathcal{D}|$ requires computing hyperbolic distances between a query embedding and document embeddings, with total cost $\mathcal{O}(|\mathcal{D}|\, d)$. This matches Euclidean dense retrieval up to constant factors.

Approximate nearest neighbor indexing can be applied without modification, as retrieval relies only on pairwise distance computations.
In summary, the proposed hyperbolic dense retrieval system has the same asymptotic computational complexity as a Euclidean transformer-based retriever:
\[
\mathcal{O}(L(n^2 d + n d^2))
\]
for encoding, and
$\mathcal{O}(|\mathcal{D}|\, d)$
for retrieval. The additional cost of hyperbolic geometry manifests only as constant-factor overhead from Lorentzian inner products and reprojection, while enabling geometry-aware modeling of hierarchical structure.

\noindent\textbf{Empirical Runtime.}
While the asymptotic analysis above establishes that hyperbolic and Euclidean retrieval share the same complexity class, the constant-factor overhead from Lorentzian operations is worth quantifying directly. We measure end-to-end latency (encoding + retrieval) for 32 queries over a corpus of 100k documents on a single NVIDIA H100, reporting wall-clock time in milliseconds. As shown in Table~\ref{tab:runtime}, HyTE-H(Euc) adds only 2.1 ms over EucBERT, reflecting the lightweight Euclidean-to-hyperbolic projection head. HyTE-FH incurs a larger 48\% overhead relative to EucBERT, but this gap is entirely in encoding rather than retrieval: at search time, Lorentzian inner products are $\mathcal{O}(d)$ and become standard dot products after the sign-flip reduction described in Sec.~4.4, so search-time cost is identical to Euclidean. The encoding overhead reflects the current absence of optimized CUDA kernels for hyperbolic transformer operations rather than a fundamental algorithmic limitation, since the underlying operations share the same asymptotic complexity as their Euclidean counterparts~\citep{he2025position}. We expect this gap to close as hyperbolic kernels mature, matching the trajectory of optimized attention implementations in the Euclidean setting.

\begin{table}[h]
\centering
\caption{End-to-end latency (encoding + retrieval) for 32 queries over 100k documents on a single H100 GPU. HyTE-H adds negligible overhead; HyTE-FH's larger gap is entirely in encoding and reflects the absence of optimized hyperbolic CUDA kernels rather than asymptotic complexity.}
\label{tab:runtime}
\begin{tabular}{lc}
\toprule
Model & Latency (ms, 32 queries) \\
\midrule
EucBERT     & 15.21 \\
HyTE-H (Euc) & 17.30 \\
HyTE-FH     & 22.48 \\
\bottomrule
\end{tabular}
\end{table}

\section{MeSH Hierarchy-Sensitive Retrieval Benchmark}
\label{app:mesh}

\paragraph{Motivation.}
Standard retrieval benchmarks such as MTEB treat all relevant documents as equally correct regardless of their granularity, and to our knowledge no existing public benchmark explicitly evaluates whether a model can distinguish documents at different levels of hierarchical specificity. Since the central claim of our work is that hyperbolic geometry better preserves hierarchy, we require an evaluation in which success depends specifically on separating documents at different depths of a known taxonomy, rather than on surface topic relevance.

\paragraph{Taxonomy.}
We build the benchmark on the Medical Subject Headings (MeSH) taxonomy, an authoritative hierarchy of roughly 30{,}000 biomedical descriptors maintained by the U.S. National Library of Medicine. MeSH is organized as a tree with up to 13 depth levels, for example Diseases $\rightarrow$ Respiratory Tract Diseases $\rightarrow$ Lung Diseases $\rightarrow$ Pneumonia. We chose MeSH for three reasons: (i) it is curated by domain experts rather than crowd-sourced, so the hierarchy is reliable; (ii) every PubMed abstract is tagged with MeSH descriptors, providing a large natural corpus with ground-truth hierarchical labels; and (iii) its depth is sufficient to construct non-trivial parent/grandparent distractors.

\paragraph{Query and target construction.}
We sample PubMed abstracts whose most specific MeSH tag lies at depth $d \in \{3, 4, 5, 6\}$. Depths below 3 yield tags that are too general to admit meaningful parent/grandparent distinctions, while depths above 6 have too few abstracts to support reliable sampling. For each of 300 sampled queries we use the article title as the query and the abstract as the retrieval target, so the task reduces to retrieving a specific abstract from among hierarchically related candidates.

\paragraph{Candidate set.}
For each query we construct a candidate pool containing five types of documents:
\begin{itemize}
    \item \textbf{Target:} the abstract tagged at depth $d$ that corresponds to the query.
    \item \textbf{Parent distractors:} abstracts tagged at depth $d-1$ in the same MeSH branch.
    \item \textbf{Grandparent distractors:} abstracts tagged at depth $d-2$ in the same MeSH branch.
    \item \textbf{Sibling distractors:} abstracts tagged at a different MeSH term at the same depth $d$ sharing the same parent node.
    \item \textbf{Random distractors:} abstracts drawn from an unrelated MeSH branch.
\end{itemize}
Because every non-random distractor shares ancestry with the target in the MeSH tree, the task cannot be solved by surface topic matching alone: success requires the retriever to encode not just what the document is about, but at what level of specificity it sits in the taxonomy.

\paragraph{Protocol and metrics.}
Evaluation is zero-shot: neither HyTE-FH nor EucBERT is fine-tuned on MeSH or PubMed data, so the benchmark probes the hierarchical geometry learned during standard pretraining and contrastive fine-tuning. Alongside standard retrieval metrics (R@1, R@5, MRR), we report Specificity Hit Rate (SpecHR), defined as the fraction of queries for which the correct-depth target abstract is ranked above every ancestor-level (parent or grandparent) document in the candidate set. SpecHR directly measures whether the retriever respects hierarchical specificity, independent of how sibling and random distractors are ranked. To diagnose where each model fails, we also report the top-1 prediction breakdown in Table~\ref{tab:mesh_errors}, which classifies the top-ranked document for each query into target, sibling, parent, grandparent, or random.

\section{Additional results}\label{app:results}
\noindent\textbf{Full MTEB Results.} Table~\ref{tab:mteb_full} presents the performance of our proposed models on the MTEB benchmark across seven task types. All models share the same architecture with 149M parameters, 768 dimensions, and 12 layers. HyTE-H achieves the best overall performance with a Mean (Task) score of 59.89, outperforming both HyTE-FH and the Euclidean baselines. Notably, HyTE-H ranks first in six out of seven task categories, demonstrating the effectiveness of the hybrid hyperbolic-Euclidean approach. HyTE-FH shows competitive performance in clustering tasks, securing second place, which suggests that hyperbolic geometry is particularly beneficial for capturing hierarchical relationships. The Euclidean equivalent (ModernBert-embed*) achieves second place in most categories but falls short of HyTE-H, indicating that incorporating hyperbolic components provides meaningful improvements over purely Euclidean representations.
\begin{table}[htbp]
\centering
\caption{MTEB Benchmark Results. * Euclidean equivalent of HyTE}
\label{tab:mteb_full}
\resizebox{\textwidth}{!}{%
\begin{tabular}{lccccccccc}
\toprule
\textbf{Model} & \textbf{Mean (Task)} & \textbf{Mean (Type)} & \textbf{Class.} & \textbf{Clust.} & \textbf{Retr.} & \textbf{Rerank.} & \textbf{STS} & \textbf{Pair Class.} & \textbf{Summ.} \\
\midrule
EucBERT & 54.11 & 51.31 & 69.07 & 35.31 & 37.01 & 35.18 & 75.02 & 80.02 & 27.62 \\
ModernBert-embed* & \underline{58.32} & \underline{55.59} & \underline{69.25} & 40.87 & \underline{43.28} & \underline{44.29} & \underline{78.00} & \textbf{83.84} & 28.37 \\
$\text{HyTE-H}^{\text{Euc}}$ & 54.57 & 53.71 & 68.76 & 38.67 & 34.26 & 44.54 & 73.33 & 80.02 & \textbf{36.45} \\
$\text{HyTE-H}^{\text{bert}}$ & 56.41 & 53.75 & 68.77 & \underline{41.95} & 40.54 & 42.05 & 74.39 & 79.70 & {28.87} \\
HyTE-H & \textbf{59.89} & \textbf{57.15} & \textbf{72.71} & \textbf{44.83} & \textbf{43.56} & \textbf{46.01} & \textbf{78.38} & \underline{83.82} & \underline{30.87} \\
\bottomrule
\end{tabular}%
}
\end{table}

\noindent\textbf{Similarity function.}
We evaluate two distance metrics for the contrastive loss in hyperbolic space: the Lorentz inner product and hyperbolic geodesic distance. While the Lorentz inner product provides a computationally convenient similarity measure, geodesic distance directly reflects intrinsic distances on the manifold. As shown in Table~\ref{tab:loss_functions}, using geodesic distance in the contrastive objective leads to improved performance across both mean task and mean task-type metrics, suggesting more effective alignment of representations in hyperbolic space.

\begin{table}[H]
\centering
\caption{Comparison of loss functions for hyperbolic embeddings. Both Lorentz inner product and geodesic distance are evaluated for their effectiveness in learning hierarchical representations.}
\label{tab:loss_functions}
\begin{tabular}{l|c|c}
\toprule
\textbf{Loss Function} & \textbf{Mean (Task)} & \textbf{Mean (TaskType)} \\
\midrule
Lorentz Inner Product & 52.59 & 51.60 \\
Geodesic Distance & \textbf{56.41} & \textbf{53.75}  \\
\bottomrule
\end{tabular}
\end{table}

\noindent\textbf{Context size.}
Table~\ref{tab:context_size} shows the effect of increasing the retrieval context size from $|\mathcal{C}|=5$ to $|\mathcal{C}|=10$. The Euclidean baseline (Gemma) exhibits a large performance gain across all metrics with the larger context window, but still falls short of the fully hyperbolic HyTE-H model. HyTE-FH also improves with increased context, while HyTE-H remains comparatively stable, achieving consistently strong performance at both context sizes. This suggests that fully hyperbolic retrieval is less sensitive to context expansion and maintains effectiveness even under smaller retrieval budgets.
\begin{table*}[h]
\centering
\caption{Average performance on RAG Bench with varying context window size.}
\label{tab:context_size}
\begin{tabular}{l|ccc|ccc}
\toprule
& \multicolumn{3}{c|}{\textbf{$|\cC|=5$}} & \multicolumn{3}{c}{\textbf{$|\cC|=10$}} \\
\cmidrule(lr){2-4} \cmidrule(lr){5-7}  
\textbf{Model} & F & CR & AR & F & CR & AR \\
\midrule
Gemma & 0.603 & 0.735 & 0.684 & 0.756 & 0.846 & \underline{0.836} \\
HyTE-FH & \underline{0.732} & \underline{0.848} & \underline{0.765} & \underline{0.770} & \underline{0.912} & 0.784 \\
HyTE-H & \textbf{0.763} & \textbf{0.904} & \textbf{0.832} & \textbf{0.787} &\textbf{0.913}& \textbf{0.847}  \\
\bottomrule
\end{tabular}
\vspace{0.5em}

{\footnotesize F = Faithfulness, CR = Context Relevance, AR = Answer Relevance. Best results in bold.}
\end{table*}

\noindent\textbf{OEM sensitivity to $p$.}
Table~\ref{tab:oem_p_sensitivity} reports MTEB performance as we vary the outward-bias exponent $p$ in OEM. Performance peaks at $p=1.0$ (used in the main paper) and degrades gracefully in both directions: $p=0.0$ recovers the standard Einstein midpoint, removing the outward correction entirely and dropping Mean(Task) by 6.1 points, which confirms that the outward bias is essential rather than incidental. Increasing $p$ to $2.0$ over-amplifies peripheral tokens and drops Mean(Task) by 2.6 points relative to $p=1.0$, indicating that the optimum reflects a balance between counteracting radial collapse and avoiding over-weighting of outlier tokens.

\begin{table}[h]
\centering
\caption{Sensitivity of OEM to the outward-bias parameter $p$ on MTEB. $p=0.0$ reduces OEM to the standard Einstein midpoint.}
\label{tab:oem_p_sensitivity}
\begin{tabular}{lcc}
\toprule
$p$ & Mean (Task) & Mean (TaskType) \\
\midrule
0.0 & 50.33 & 47.19 \\
1.0 & \textbf{56.41} & \textbf{53.75} \\
2.0 & 53.76 & 51.13 \\
\bottomrule
\end{tabular}
\end{table}

\noindent\textbf{Hubness Analysis.}
A well-known pathology in high-dimensional dense retrieval is \emph{hubness}~\citep{radovanovic2010hubs}: a small number of documents appear as nearest neighbors for a disproportionate fraction of queries, while most of the corpus is never retrieved. This arises because documents close to the center of the embedding distribution have uniformly small distances to all other points, causing them to be retrieved regardless of the query. Propositions~\ref{prop:euclidean-contraction} and~\ref{prop:einstein-insufficient} show that naive pooling and the standard Einstein midpoint contract representations toward the origin, pushing embeddings into exactly this problematic central region; OEM (Thm.~\ref{thm:outward-bias}) instead preserves radial spread, keeping documents well-separated rather than collapsing them into a central cluster. To test this, we measured hubness on MS MARCO (100k passages, 6{,}980 dev queries) following~\citet{radovanovic2010hubs}: for each document we count how often it appears in top-$k$ lists across all queries and report the skewness $S_N$ (higher = more severe hubness), together with the fraction of the corpus never retrieved. As shown in Table~\ref{tab:hubness}, CLS pooling produces catastrophic hubness (97.45\% of the corpus never retrieved), consistent with Proposition~\ref{prop:euclidean-contraction}, and Euclidean mean pooling also shows severe hubness. OEM reduces skewness by $4\times$ over Euclidean mean pooling and distributes retrievals far more uniformly across the corpus, providing a concrete mechanism that links hierarchy preservation to retrieval quality: OEM's radial preservation prevents the artificial centrality that drives hub formation, so the top-$k$ list reflects genuine query--document relevance rather than geometric artifacts.

\begin{table}[h]
\centering
\caption{Hubness on MS MARCO (100k passages, 6{,}980 queries) at $k{=}10$. $S_N$ is the skewness of the $N_k$ distribution (higher = more severe hubness). OEM reduces hubness by $4\times$ over Euclidean mean pooling and retrieves a substantially larger fraction of the corpus.}
\label{tab:hubness}
\resizebox{\linewidth}{!}{
\begin{tabular}{llcc}
\toprule
Model & Pooling & $S_N$ ($k{=}10$) & \% Corpus Never Retrieved \\
\midrule
HyTE-FH  & CLS         & 141.4 & 97.45\% \\
EucBERT  & Mean        & 71.3  & 74.30\% \\
HyTE-FH  & OEM (ours)  & \textbf{17.0}  & \textbf{61.60\%} \\
\bottomrule
\end{tabular}}
\end{table}

\subsection{Case Study: The Impact of Retrieval Geometry on Answer Quality}
\label{sec:case_study}

To illustrate how embedding geometry affects end-to-end RAG performance, we analyze a representative query from the Emanuel dataset where models exhibit markedly different behaviors.

\paragraph{Query.} \textit{``What is the feature of Bixby guide?''}

\noindent This query requires retrieving documentation about a specific Samsung TV feature: the Bixby tutorial that appears when users first interact with the voice assistant. The correct answer is contained in a single passage within the hierarchically organized e-Manual.

\paragraph{Quantitative Comparison.} Table~\ref{tab:bixby_case} presents the evaluation metrics for each model. Only HyTE-H successfully retrieves relevant context and generates a faithful response.

\begin{table}[t]
\centering
\caption{Case study metrics for the query \textit{``What is the feature of Bixby guide?''} CR = Context Relevance, F = Faithfulness, AR = Answer Relevancy.}
\label{tab:bixby_case}
\small
\begin{tabular}{@{}lcccl@{}}
\toprule
\textbf{Model} & \textbf{CR} & \textbf{F} & \textbf{AR} & \textbf{Failure Mode} \\
\midrule
HyTE-H   & \textbf{1.0} & \textbf{1.0} & \textbf{1.0} & None (Success) \\
GTE      & 0.5 & 1.0  & 0.0  & Honest Refusal \\
Gemma    & 1.0 & 0.25 & 0.94  & Ungrounded Extrapolation \\
\bottomrule
\end{tabular}
\end{table}

\paragraph{Qualitative Analysis.} We identify four distinct outcomes based on retrieval quality and LLM response behavior:

\textbf{(1) Successful Retrieval (HyTE-H).} The hyperbolic hybrid model retrieves the exact passage describing the Bixby guide: \textit{``When you press the button for the first time, the Using Bixby button appears... a tutorial on using Bixby is shown.''} This enables a concise, accurate response:

\begin{quote}
\small\textit{``The feature of Bixby guide is a tutorial on using Bixby, which is shown when you press the button on your Samsung Smart Remote after the first time, and then press the Select button.''}
\end{quote}

\textbf{(2) Retrieval Collapse with Honest Refusal (GTE).} GTE retrieves its default ``hub'' documents—generic content about picture quality, SmartThings, and antenna connections—regardless of the query. The LLM correctly identifies the context mismatch:

\begin{quote}
\small\textit{``Unfortunately, the provided context does not mention the Bixby guide.''}
\end{quote}

\noindent While this response is faithful (F=1.0), it provides no utility (AR=0.0). This pattern, where Euclidean embeddings collapse to retrieving the same generic documents, occurred for 100\% of queries in GTE's results.

\textbf{(3) Partial Context with Topic Drift (Gemma).} Gemma retrieves tangentially related content about the Samsung Smart Remote that mentions Bixby but not the Bixby \textit{guide}. The LLM provides information about running Bixby generally rather than the guide feature:

\begin{quote}
\textit{``Unfortunately, the provided context does not mention the Bixby guide. However, based on general knowledge, the Bixby guide is a feature in Samsung TVs that provides users with a personalized interface to access various features and settings. It offers a simplified and intuitive way to navigate the TV's menu...''}
\end{quote}

\noindent This response achieves \textit{high} answer relevancy (AR=0.94) because it sounds plausible, but \textit{low} faithfulness (F=0.25) because the claims are fabricated. This ``confident hallucination'' failure mode is particularly dangerous: users receive authoritative-sounding misinformation.

\paragraph{JSON Representation.} The complete retrieval outputs are:
\noindent\textbf{HyTE-H (Successful Retrieval):}
\begin{lstlisting}[language=json]
{
  "query": "What is the feature of Bixby guide?",
  "model": "HyTE-H",
  "metrics": {
    "context_relevance": 1.0,
    "faithfulness": 1.0,
    "answer_relevancy": 1.0
  },
  "top_retrieved_context": "Running Bixby. Press and hold the button on your Samsung Smart Remote... To view the Bixby guide, press the button once: When you press the button for the first time, the Using Bixby button appears at the bottom of the screen. Press the Select button. The Using Bixby popup window appears and a tutorial on using Bixby is shown.",
  "response": "The feature of Bixby guide is a tutorial on using Bixby, which is shown when you press the button on your Samsung Smart Remote after the first time, and then press the Select button.",
  "outcome": "success"
}
\end{lstlisting}

\vspace{0.5em}
\noindent\textbf{GTE (Retrieval Collapse):}
\begin{lstlisting}[language=json]
{
  "query": "What is the feature of Bixby guide?",
  "model": "GTE",
  "metrics": {
    "context_relevance": 0.5,
    "faithfulness": 1.0,
    "answer_relevancy": 0.0
  },
  "top_retrieved_context": "Adjusting the Picture Quality. Change the Picture Mode and adjust Expert Settings.",
  "response": "Unfortunately, the provided context does not mention the Bixby guide.",
  "outcome": "retrieval_collapse_honest_refusal"
}
\end{lstlisting}

\vspace{0.5em}
\noindent\textbf{Gemma (Ungrounded Extrapolation):}
\begin{lstlisting}[language=json]
{
  "query": "What is the feature of Bixby guide?",
  "model": "Gemma",
  "metrics": {
    "context_relevance": 0.5,
    "faithfulness": 0.25,
    "answer_relevancy": 0.94
  },
  "top_retrieved_context": "Removing channels from a favorites list. Select channels in a favorites list, and then select the icon.",
  "response": "Unfortunately, the provided context does not mention the Bixby guide. However, based on general knowledge, the Bixby guide is a feature in Samsung TVs that provides users with a personalized interface to access various features and settings. It offers a simplified and intuitive way to navigate the TV's menu and access popular features.",
  "outcome": "ungrounded_extrapolation",
  "warning": "High answer_relevancy (0.94) masks unfaithful content (0.25)"
}
\end{lstlisting}